\documentclass[sunil,ChapterTOCs]{sunil} %See documentation for other class options
\usepackage{fixltx2e,fix-cm}
\usepackage{amssymb}
\usepackage{amsmath}
\usepackage{graphicx}
\usepackage{subfigure}
\usepackage{makeidx}
\usepackage{multicol}
%\usepackage[breaklinks]{hyperref}
%%%%%
%%%%% Added by Fred Ezerman%
\usepackage{verbatim}
\usepackage{xcolor}
\usepackage{color,colortbl}
\usepackage{physics}
\usepackage{stmaryrd}
\usepackage{amscd,amsthm}
\usepackage{mathtools}
\usepackage{amsfonts}
\usepackage{url}
\usepackage{etoolbox}
\usepackage{enumitem}

\newcommand{\etal}{\emph{et al.} }
\newcommand{\eg}{\emph{e.g.}}
\newcommand{\ie}{\emph{i.e.}}

\newcommand\inner[1]{\langle #1 \rangle}
\newcommand\dsb[1]{\llbracket #1 \rrbracket}
%%%%
\def\bv{{\mathbf{v}}}
\def\bw{{\mathbf{w}}}

\def\bu{{\mathbf{u}}}

\def\0{{\mathbf{0}}}
\def\1{{\mathbf{1}}}
\def\ba{{\mathbf{a}}}
\def\bb{{\mathbf{b}}}
\def\bc{{\mathbf{c}}}
\def\Ee{{\rm E}}
\def\EE{{\mathcal{E}}}

\newcommand{\Euc}{\mathop{{\rm E}}}
\newcommand{\HH}{\mathop{{\rm H}}}
\DeclareMathOperator{\wQ}{w_{Q}}
\DeclareMathOperator{\wH}{w_{H}}

\hyphenation{op-tical net-works semi-conduc-tor}
\allowdisplaybreaks
%%%%%
\frenchspacing
\tolerance=5000

\makeindex

\newtheorem{theorem}{Theorem}[section]

\newtheorem{example}{Example}
\newtheorem{definition}{Definition}
\newtheorem{proposition}[theorem]{Proposition}

\begin{document}
\title{Quantum Error-Control Codes} %This is a placeholder titlepage, it will not be final.
\author{Martianus Frederic Ezerman}
\maketitle
\chapter{Quantum Error-Control Codes}

\section{Introduction}\label{intro}

Information is physical. It is sensible to use \textbf{quantum mechanics}\index{quantum mechanics} as a basis of computation and information processing~\cite{Feynman1982}. Here 	at the intersection of information theory, computing, and physics, mathematicians and computer scientists must think in terms of the quantum physical realizations of messages. The often philosophical debates among physicists over the nature and interpretations of quantum mechanics\index{quantum mechanics} shift to harnessing its power for information processing and testing the theory for completeness.

One cannot directly access information stored and processed in massively entangled quantum systems without destroying the content. Turning large-scale \textbf{quantum computing}\index{quantum!computing} into practical reality is massively challenging. To start with, it requires techniques for error control that are much more complex than those implemented effectively in classical systems. As a quantum system\index{quantum system} grows in size and circuit depth, error control becomes ever more important. 

\textbf{Quantum error-control}\index{quantum!error-control} is a set of methods to protect quantum information from unwanted environmental interactions, known as \textbf{decoherence}\index{decoherence}. Classically, one encodes information-carrying vectors into a larger space to allow for sufficient redundancy for error detection and correction. In the quantum setup, information is stored in a subspace embedded in a larger Hilbert space, which is a finite dimensional, normed, vector space over the field of complex numbers $\mathbb{C}$. Codewords are quantum states and errors are operators. 

The good news is that noise, if it can be kept below a certain level, is not an obstacle to resilient quantum computation. This crucial insight is arrived at based on seminal results that form the so-called treshold theorems. Theoretical references include the exposition of Knill \etal in~\cite{Knill1998}, the work of Preskill in~\cite{Preskill1997}, the results shown by Steane in~\cite{Steane2003}, and the paper of Ahoronov and Ben-Or~\cite{Aharonov2008}. A comprehensive review on related experiments is available in~\cite{Campbell2017}. 

The possibility of correcting errors in quantum systems\index{quantum system} was shown, \eg, in the early works of Shor \cite{Shor1995}, Steane \cite{Steane1996} and Laflamme \etal \cite{Laflamme1996}. While the quantum codes\index{quantum code} that these pioneers proposed may nowadays seem to be rather trivial in performance, their construction highlighted both the main obstacles and their respective workarounds. Measurement collapses the information contained in the state into something useless. One should measure the error, not the data. Since repetition is ruled out due to the \textbf{no-cloning theorem}\index{no-cloning theorem}~\cite{Wootters1982}, we use redundancy from spreading the states to avoid repetition. There are multiple types of errors, as we will soon see. The key is to start by correcting the phase errors and, then, use the Hadamard transform to exchange the bit flips and the phase errors. Quantum errors are continuous. Controlling them seemed to be too daunting a task. It turned out that handling a set of discrete \textbf{error operators}\index{quantum error operators|seealso {Pauli matrices}}, represented by \textbf{tensor product}\index{tensor product} of \textbf{Pauli matrices}\index{Pauli matrices}, allows for the control of every $\mathbb{C}$-linear combination of these operators.

Advances continue to be made as effort intensifies to scale quantum computing\index{quantum computing} up. Research in \textbf{quantum error-correcting codes (QECs)}\index{quantum!error-correcting codes}\index{code!quantum| see {quantum, error-correcting codes}} has attracted the sustained attention of established researchers and students alike. Several excellent online lecture notes, surveys, and books are available. Developments up to 2011 have been well-documented in~\cite{Lidar2013}. It is impossible to describe the technical details of every important research direction in QECs. We focus on \textbf{quantum stabilizer codes}\index{quantum code!stabilizer}\index{code!quantum stabilizer} and their variants. The decidedly biased take here is for the audience with more applied mathematics background, including coding theorist, information theorist, researchers in discrete mathematics and finite geometries. No knowledge of quantum mechanics\index{quantum mechanics} is required beyond the very basic. This chapter is meant to serve as an entry point for those who want to understand and get involved in building upon this foundational aspect of quantum information processing and computation, which have been tipped to be indispensable in future technologies. 

A quantum stabilizer\index{quantum code!stabilizer} code is designed so that errors with high probability of occuring transform information-carrying states to an error space which is orthogonal to the original space. The beauty lies in how natural the determination of the location and type of each error in the system is. Correction becomes a simple application of the type of error at the very location.

\section{Preliminaries}\label{sec:prelims}

Consider the field extensions $\mathbb{F}_p$ to $\mathbb{F}_{q=p^r}$ to $\mathbb{F}_{q^m=p^{rm}}$, for positive integers $r$ and $m$. For $\alpha \in \mathbb{F}_{q^m}$, the \textbf{trace mapping}\index{trace map} from $\mathbb{F}_{q^m}$ to $\mathbb{F}_q$ is given by 
\[
\Tr_{\mathbb{F}_{q^m}/\mathbb{F}_q} (\alpha) = \alpha + \alpha^q + \ldots + \alpha^{q^{m-1}} \in \mathbb{F}_q.
\]
The trace of $\alpha$ is the sum of its \textbf{conjugates}\index{field!finite!conjugates in}. If the extension $\mathbb{F}_q$ of $\mathbb{F}_p$ is contextually clear, the notation $\Tr$ is sufficient. Properties of the trace map\index{trace map} can be found in standard textbooks, \eg, \cite[Chapter~2]{Lidl1996}. The key idea is that $\Tr_{\mathbb{F}_{q^m}/\mathbb{F}_q}$ serves as a \emph{description} for all linear transformations from $\mathbb{F}_{q^m}$ to $\mathbb{F}_q$.

Let $G$ be a \textbf{finite abelian group}\index{group!finite abelian}, written multiplicatively, with the identity $1_{G}$. Let $U$ be the multiplicative group\index{group} of complex numbers of modulus $1$, \ie, the unit circle of radius $1$ on the complex plane $\mathbb{C}$. A \textbf{character}\index{character} $\chi : G \mapsto U$ is a homomorphism. For any $g \in G$, the images of $\chi$ are $\abs{G}$-th roots of unity since $\left(\chi(g)\right)^{\abs{G}} = \chi \left(g^{\abs{G}}\right)=1$. Let $\overline{c}$ denote the complex conjugate of $c$. Then $\chi(g^{-1})= \left(\chi(g)\right)^{-1} = \overline{\chi(g)}$. The only trivial character\index{character} is $\chi_{0}: g \mapsto 1$ for all $g \in G$. One can associate $\chi$ and $\overline{\chi}$ by using $\overline{\chi}(g)=\overline{\chi(g)}$. The set of all characters\index{character} of $G$ forms, under composition $\circ$, an abelian group\index{group!abelian} $\widehat{G}$.

For $g, h \in G$ and $\chi, \Psi \in \widehat{G}$ we have two \textbf{orthogonality relations}\index{orthogonality relation}  
\begin{equation}\label{eq:ortho}
	\sum_{g \in G} \chi(g) \, \overline{\Psi(g)} =
	\begin{cases}
		0 \mbox{, if } \chi \neq \Psi\\
		\abs{G} \mbox{, if } \chi = \Psi
	\end{cases}
	\mbox{and} \quad 
	\sum_{\chi \in \widehat{G}} \chi(g) \, \chi(h^{-1}) =
	\begin{cases}
		0 \mbox{, if } g \neq h\\
		\abs{G} \mbox{, if } g = h
	\end{cases}.
\end{equation}

The additive character\index{character} 
$\chi_1 := c \mapsto e^{\frac{2 \pi }{p} \Tr(c)}$, for all $c \in \mathbb{F}_q$, is called the 
\textbf{canonical character}\index{character!canonical}. For a chosen $b \in \mathbb{F}_q$ and for all $c \in \mathbb{F}_q$, 
\[
\chi_b := \mathbb{F}_q \rightarrow U \mbox{ sending } 
c \mapsto \chi_1 (b \cdot c) = e^{\frac{2 \pi}{p} \Tr(b \cdot c)}
\]
is a character of $(\mathbb{F}_q,+)$. Every character\index{character} of $(\mathbb{F}_q,+)$ can, in fact, be expressed in this manner. The extension to $(\mathbb{F}_q^n,+)$ is straightforward. 
\begin{theorem}
	Let $\zeta := e^{\frac{2 \pi}{p}}$ and $\Tr$ be the trace map\index{trace map} with $q=p^m$. Let $\ba = (a_1,a_2,\ldots,a_n)$ and $\bb=(b_1,b_2,\ldots,b_n)$ be vectors in $\mathbb{F}_q^n$. For each $\ba$, 
	\[
	\lambda_{\ba} : \mathbb{F}_q^n \mapsto \{1,\zeta,\zeta^2,\ldots,\zeta^{p-1}\} \mbox{, sending }
	\bb \mapsto \zeta^{\Tr(\ba \cdot \bb)} = \zeta^{\Tr(a_1 b_1+\ldots+a_n b_n)},
	\]
	for all $\bb \in \mathbb{F}_q$, is a character\index{character} of $(\mathbb{F}_q^n,+)$. Hence, 
	$\widehat{\mathbb{F}_q^n} = \{ \lambda_{\ba} : \ba \in \mathbb{F}_q^n \}$.
\end{theorem}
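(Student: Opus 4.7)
The plan is to establish the theorem in three steps: first verify that each $\lambda_{\ba}$ is a character, then show the map $\ba \mapsto \lambda_{\ba}$ is an injective group homomorphism into $\widehat{\mathbb{F}_q^n}$, and finally conclude via a counting argument. To see that $\lambda_{\ba}$ is a character, I would invoke the additivity of the trace together with the bilinearity of the dot product to expand $\lambda_{\ba}(\bb+\bb') = \zeta^{\Tr(\ba \cdot \bb) + \Tr(\ba \cdot \bb')} = \lambda_{\ba}(\bb)\,\lambda_{\ba}(\bb')$, noting that the image sits inside the $p$-th roots of unity because $\Tr(\ba \cdot \bb) \in \mathbb{F}_p$. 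The same manipulation in the $\ba$-slot gives $\lambda_{\ba + \ba'} = \lambda_{\ba} \, \lambda_{\ba'}$, so $\ba \mapsto \lambda_{\ba}$ is a homomorphism into $\widehat{\mathbb{F}_q^n}$.

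The heart of the argument is injectivity: I would show that $\lambda_{\ba} = \chi_{0}$ forces $\ba = \0$. Suppose $\Tr(\ba \cdot \bb) = 0$ for every $\bb \in \mathbb{F}_q^n$. If some coordinate $a_i$ were nonzero, specializing $\bb$ to be supported on the $i$-th coordinate with arbitrary entry $c \in \mathbb{F}_q$ would give $\Tr(a_i c) = 0$ for all $c \in \mathbb{F}_q$. Since multiplication by the nonzero $a_i$ permutes $\mathbb{F}_q$, this would force $\Tr$ to vanish identically on $\mathbb{F}_q$, which is false. This non-vanishing is the one ingredient not already contained in the preceding text and is the main obstacle to a self-contained writeup; it is justified by observing that $\Tr_{\mathbb{F}_q/\mathbb{F}_p}(x) = x + x^p + \ldots + x^{p^{m-1}}$ has degree $p^{m-1} < p^m = q$, so cannot annihilate all of $\mathbb{F}_q$.

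To finish, I would note that $|\widehat{\mathbb{F}_q^n}| = |\mathbb{F}_q^n| = q^n$, a standard consequence of the structure theorem for finite abelian groups that is also implicit in the second orthogonality relation of equation~\eqref{eq:ortho} taken with $g = h$. An injective homomorphism between finite groups of the same order is automatically a bijection, so every character of $(\mathbb{F}_q^n, +)$ must equal $\lambda_{\ba}$ for a unique $\ba \in \mathbb{F}_q^n$, yielding the claimed description of $\widehat{\mathbb{F}_q^n}$.
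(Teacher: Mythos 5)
Your proof is correct. The paper states this theorem without proof---it is offered as a standard preliminary fact, with the surrounding text merely asserting that every character of $(\mathbb{F}_q,+)$ arises this way and that "the extension to $(\mathbb{F}_q^n,+)$ is straightforward"---so there is no argument in the paper to compare against. Your three steps (multiplicativity of $\lambda_{\ba}$ from the $\mathbb{F}_p$-linearity of the trace, injectivity of $\ba \mapsto \lambda_{\ba}$ from the nondegeneracy of the trace pairing, and the count $\lvert\widehat{\mathbb{F}_q^n}\rvert = q^n$) constitute the standard argument, and you correctly isolated and justified the one genuinely nontrivial ingredient: that $\Tr_{\mathbb{F}_q/\mathbb{F}_p}$ does not vanish identically, which follows since it is given by a polynomial of degree $p^{m-1} < q$ and so cannot have $q$ roots.
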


A \textbf{qubit}\index{qubit}\index{quantum bit|see {qubit}}, a term coined by Schumacher in~\cite{Schumacher1995}, is the canonical quantum system\index{quantum system} consisting of two distinct levels. The states of a qubit live in $\mathbb{C}^2$ and are defined by their continuous amplitudes. A \textbf{qudit}\index{qudit} refers to a system of $q \geq 3$ distinct levels, with a \textbf{qutrit}\index{qutrit} commonly used when $q=3$. Physicists prefer the ``\textbf{bra}\index{bra}'' $\ket{\cdot}$ and ``\textbf{ket}\index{ket}'' $\bra{\cdot}$ notation to describe quantum systems. A $\ket{\varphi}$ is a (column) vector while $\bra{\psi}$ is the vector dual of $\ket{\psi}$.

\begin{definition}[\textbf{Quantum systems}\index{quantum system}]
	A qubit\index{qubit} is a nonzero vector in $\mathbb{C}^{2}$, usually with basis $\{\ket{0},\ket{1}\}$. It is written in vector form as $\ket{\varphi}:= \alpha \ket{0} + \beta \ket{1}$, or in matrix form as 
	$\displaystyle{\begin{bmatrix} \alpha \\ \beta\end{bmatrix}}$, with 
	$\norm{\alpha}^2 + \norm{\beta}^2 = 1$ .
	
	An $n$-qubit system or vector is a nonzero element in $\left(\mathbb{C}^2\right)^{\otimes n} \cong \mathbb{C}^{2^n}$. Let $\ba=(a_1,\ldots,a_{n}) \in \mathbb{F}_2^n$. The standard $\mathbb{C}$-basis is 
	\[
	\{\ket{a_1 a_2 \ldots a_{n}} :=
	\ket{a_1} \otimes \ket{a_2} \otimes \ldots \otimes \ket{a_{n}}: \ba \in \mathbb{F}_2^n\}. 
	\]
	An arbitrary nonzero vector in $\mathbb{C}^{2^n}$ is written
	\[
	\ket{\psi} = \sum_{\ba \in \mathbb{F}_2^n} c_{\ba} \ket{\ba} \mbox{, with } c_{\ba} \in \mathbb{C} \mbox{ and } \frac{1}{2^n} \sum_{\ba \in \mathbb{F}_2^n} \norm{c_{\ba}}^2 = 1.
	\]
	The normalization is optional since $\ket{\psi}$ and $\alpha \ket{\psi}$ are considered the same state for nonzero $\alpha \in \mathbb{C}$.
\end{definition}	

The \textbf{inner product} of $\ket{\psi} : = \sum_{\ba \in \mathbb{F}_2^n} c_{\ba} \ket{\ba}$ and 
$\ket{\varphi} : = \sum_{\ba \in \mathbb{F}_2^n} b_{\ba} \ket{\ba}$ is 
\[
\braket{\psi}{\varphi}= \sum_{\ba \in \mathbb{F}_2^n} \overline{c_{\ba}} \, b_{\ba}.
\]
Their (Kronecker) \textbf{tensor product}\index{tensor product} is written as $\ket{\varphi} \otimes \ket{\psi}$ and is often abbreviated to $\ket{\varphi \psi}$. The states $\ket{\psi}$ and $\ket{\varphi}$ are \textbf{orthogonal}\index{orthogonal state} or \textbf{distinguishable} if $\braket{\psi}{\varphi}=0$. Let $A$ be a $2^n \times 2^n$ complex unitary matrix with conjugate transpose $A^{\dagger}$. The (Hermitian) inner product of $\ket{\varphi}$ and $A \ket{\psi}$ is equal to that of $A^{\dagger} \ket{\varphi}$ and $\ket{\psi}$. Henceforth, $i := \sqrt{-1}$.

\begin{definition}[\textbf{Qubit error operators}]
	A qubit {\it error operator}\index{qubit}\index{quantum error operators|seealso {Pauli matrices}} is a unitary $\mathbb{C}$-linear operator acting on $\mathbb{C}^{2^n}$ qubit by qubit. It can be expressed by a unitary matrix with respect to the basis $\{ \ket{0},\ket{1}\}$. The three {\it nontrivial} errors acting on a qubit are known as the \textbf{Pauli matrices}\index{Pauli matrices}:
	\begin{equation}\label{eq:Pauli}
		\sigma_x =
		\begin{bmatrix}
			0 & 1 \\ 1 & 0
		\end{bmatrix}, 
		\quad 
		\sigma_z = 
		\begin{bmatrix}
			1 & 0 \\ 0 & -1
		\end{bmatrix}, 
		\quad 
		\sigma_y = i \, \sigma_x \, \sigma_z =
		\begin{bmatrix}
			0 & -i \\ i & 0
		\end{bmatrix}.
	\end{equation}
\end{definition}	

The actions of the error operators\index{quantum error operators|seealso {Pauli matrices}} on a qubit\index{qubit} $\ket{v} = \alpha \ket{0} + \beta \ket{1} \in \mathbb{C}^2$ can be considered based on their types. The \textbf{trivial operator} $I_2$ leaves the qubit unchanged. The \textbf{bit-flip error}\index{quantum error!bit-flip} $\sigma_x$ flips the probabilities
\[
\sigma_x \, \ket{\varphi}=\beta \ket{0} + \alpha \ket{1} \mbox{ or }
\sigma_x 
\begin{bmatrix} \alpha \\ \beta
\end{bmatrix}
=
\begin{bmatrix} \beta \\ \alpha
\end{bmatrix}.
\]
The \textbf{phase-flip error}\index{quantum error!phase-flip} $\sigma_z$ modifies the angular measures 
\[
\sigma_z \, \ket{\varphi}=\alpha \ket{0} - \beta \ket{1} \mbox{ or }
\sigma_z 
\begin{bmatrix} \alpha \\ \beta
\end{bmatrix}
=
\begin{bmatrix} \alpha \\ - \beta
\end{bmatrix}.
\]
The \textbf{combination error}\index{quantum error!combination} $\sigma_y$ contains both bit-flip\index{quantum error!bit-flip} and phase-flip\index{quantum error!phase-flip}, implying 
\[
\sigma_y \, \ket{\varphi}=-i \beta \ket{0} + i \alpha \ket{1} \mbox{ or }
\sigma_y 
\begin{bmatrix} \alpha \\ \beta
\end{bmatrix}
=
\begin{bmatrix} -i \beta \\ i \alpha
\end{bmatrix}.
\]
It is immediate to confirm that $\sigma_x^2 = \sigma_y^2 = \sigma_z^2 = I_2$ and $\sigma_x \sigma_z = - \sigma_z \sigma_x$. The Pauli matrices\index{Pauli matrices} generate a group\index{group} of order $16$. Each of its elements can be uniquely represented as $i^{\lambda} w$, with $\lambda \in \{0,1,2,3\}$ and $w \in \{I_2,\sigma_x, \sigma_z, \sigma_y\}$. 

\section{The Stabilizer Formalism}\label{sec:stab}

The most common route from classical coding theory to QEC is via the \textbf{stabilizer formalism}\index{quantum stabilizer formalism}, from which numerous specific constructions emerge. Classical codes can \emph{not} be used \emph{as} quantum codes\index{quantum code} but can \emph{model} the error operators\index{quantum error operators|seealso {Pauli matrices}} in some quantum channels. The capabilities of a QEC can then be inferred from the properties of the corresponding classical codes. The main tools come from \textbf{character theory}\index{character} and \textbf{symplectic geometry}\index{symplectic!geometry} over \textbf{finite fields}\index{fields!finite}. Our focus is on the qubit\index{qubit} setup since it is the most deployment-feasible and because the general qudit\index{qudit} case naturally follows from it.

Let $\ba =(a_1,a_2,\ldots,a_n) \in \mathbb{F}_2^n$, $\lambda \in \{0,1,2,3\}$, and $w_j \in \{I_2,\sigma_x,\sigma_z,\sigma_y\}$. A \textbf{quantum error operator}\index{quantum error operators|seealso {Pauli matrices}} on $\mathbb{C}^{2^n}$ is of the form $\Ee := i^{\lambda} w_1 \otimes w_2 \otimes \ldots \otimes w_n$. It is a $\mathbb{C}$-linear unitary operator acting on a $\mathbb{C}^{2^n}$-basis $\{ \ket{\ba} = \ket{a_1} \otimes \ket{a_2} \otimes \ldots \otimes \ket{a_n} \}$ by 
$\Ee \ket{\ba} : = i^{\lambda} \left(w_1 \ket{a_1} \otimes w_2 \ket{a_2} \otimes \ldots \otimes w_n \ket{a_n}\right)$. The \textbf{set of error operators}\index{quantum error operators|seealso {Pauli matrices}}
\[
\EE_n := \{ i^{\lambda} w_1 \otimes w_2 \otimes \ldots \otimes w_n \}
\]
is a non-abelian group\index{group!non-abelian} of cardinality $4^{n+1}$. Given $\Ee := i^{\lambda} w_1 \otimes w_2 \otimes \ldots \otimes w_n$ and $\Ee' := i^{\lambda'} w_1' \otimes w_2' \otimes \ldots \otimes w_n'$ in $\EE_n$, we have 
\begin{align*}
	\Ee \, \Ee' &= 
	i^{\lambda + \lambda'} (w_1 w_1') \otimes (w_2 w_2') \otimes \ldots \otimes (w_n w_n')\\
	&= i^{\lambda + \lambda' + \lambda''} w_1'' \otimes w_2'' \otimes \ldots \otimes w_n'' 
	\mbox{, where } w_j w_j' = i^{\lambda_j''} w_j'' \mbox{ and } \lambda'' = \sum_{j=1}^n \lambda_j''.
\end{align*}
Expanding $\Ee'\, \Ee$ makes it clear that $ \Ee \, \Ee' = \pm 1 ~\Ee'\, \Ee$.
\begin{example}\label{ex1}
	Given $n=2$, $\Ee = I_2 \otimes \sigma_x$ and $\Ee'=\sigma_z \otimes \sigma_y$, we have 
	$\Ee \, \Ee' = \sigma_z \otimes \sigma_x \sigma_y = \sigma_z \otimes i \sigma_z = i \sigma_z \otimes  \sigma_z $ and  
	$\Ee' \, \Ee = \sigma_z \otimes \sigma_y \sigma_x = \sigma_z \otimes i^{3} \sigma_z = i^{3} \sigma_z \otimes \sigma_z$. \quad \qedsymbol
\end{example}

The \textbf{center}\index{group!center} of $\EE_n$ is $\mathcal{C}(\EE_n):=\{i^{\lambda} I_2 \otimes I_2 \otimes \ldots \otimes I_2\}$. Let $\overline{\EE_n}$ denote the \textbf{quotient group}\index{group!quotient} $\EE_n / \mathcal{C}(\EE_n)$ of cardinality $\abs{\overline{\EE_n}}=4^n$. This group is an abelian $2$-elementary\index{group!abelian $2$-elementary} group $\cong (\mathbb{F}_2^{2n},+)$, since $\overline{\Ee}^2 = I_2 \otimes \cdots \otimes I_2 = I_{2^n}$ for any $\overline{\Ee} \in \overline{\EE_n}$.

We switch notation to define the product of error operators\index{quantum error operators|seealso {Pauli matrices}} in terms of an inner product of their vector representatives. We write $\Ee = i^{\lambda} w_1 \otimes w_2 \otimes \ldots \otimes w_n$ as $\Ee = i^{\lambda + \epsilon} X(\ba) Z(\bb)$, where $\ba=(a_1,\ldots,a_n), \bb=(b_1,\ldots,b_n) \in \mathbb{F}_2^n$ and $\epsilon := \abs{ \{ 1 \leq i \leq n\} : w_i = \sigma_y }$, by replacing $(a_i,b_i)$ with $(0,0)$ if $w_i = I_2$, by $(1,0)$ if $w_i = \sigma_x$, by $(0,1)$ if $w_i = \sigma_z$, and by $(1,1)$ if $w_i = \sigma_y$. 

The respective actions of $X(\ba)$ and $Z(\bb)$ on any vector $\ket{\bv} \in \mathbb{C}^{2^n}$, for $\bv \in \mathbb{F}_2^n$, are $X(\ba) \ket{\bv} = \ket{\ba + \bv}$ and 
$Z(\bb) \ket{\bv} = (-1)^{\bb \cdot \bv} \ket{\bv}$. The matrix for $X(\ba)$ is a symmetric $\{0,1\}$ matrix. It represents a permutation consisting of $2^{n-1}$ transpositions. The matrix for $Z(\bb)$ is diagonal with diagonal entries $\pm 1$. Hence, writing the operators in $\EE_n$ as $\Ee := i^{\lambda} X(\ba) Z(\bb)$ and $\Ee' := i^{\lambda'} X(\ba') Z(\bb')$, one gets $\Ee \, \Ee' = (-1)^{\ba \cdot \bb' + \ba' \cdot \bb} ~\Ee' \, \Ee$. The \textbf{symplectic inner product}\index{symplectic!inner product|seealso{inner product, symplectic}}\index{inner product!symplectic} of $(\ba|\bb)$ and $(\ba'|\bb')$ in $\mathbb{F}_2^{2n}$ is
\begin{equation}\label{eq:symplectic}
	\inner{(\ba|\bb), (\ba'|\bb')}_s = \ba \cdot \bb' + \ba' \cdot \bb 
\end{equation}
or, in matrix form,
\[
\inner{(\ba|\bb), (\ba'|\bb')}_s = 
\begin{bmatrix}
	\ba & \bb 
\end{bmatrix}
\,
\begin{bmatrix}
	0 & I_n\\
	I_n & 0 
\end{bmatrix}
\,
\begin{bmatrix}
	\ba' \\ \bb' 
\end{bmatrix}.
\]
The \textbf{symplectic dual}\index{symplectic!dual} of $C \subseteq \mathbb{F}_2^{2n}$ is 
$C^{\perp_s} = \{ \bu \in \mathbb{F}_2^{2n} : \inner{\bu,\bc}_s=0 \, \forall \, \bc \in C \}$. 
Thus, a subgroup $G$ of $\EE_n$ is abelian if and only if $\overline{G}$ is a symplectic self-orthogonal\index{symplectic!self-orthogonal} subspace of $\overline{\EE_n} \cong \mathbb{F}_2^{2n}$.

\begin{example}\label{ex2}
	Continuing from Example~\ref{ex1}, we write $\Ee=X((0,1)) Z((0,0))$ and $\Ee'=i X((0,1)) Z((1,1))$. We choose the ordering $(0,0),(0,1),(1,0),(1,1)$ of $\mathbb{F}_2^2$ and the corresponding ordering for the basis of $\mathbb{C}^4$. The matrix for $X((0,1))$ agrees with $I_2 \otimes \sigma_x$, the matrix for $Z((0,0))$ is $I_4$, and the matrix for $Z((1,1))$ is diagonal with diagonal entries $1,-1,-1,1$. Multiplying matrices confirms that $\sigma_z \otimes \sigma_y$ is indeed $i X((0,1)) Z((1,1))$. \quad \qedsymbol
\end{example}
The \textbf{quantum weight}\index{quantum weight} of an error operator\index{quantum error operators|seealso {Pauli matrices}} $\Ee = i^{\lambda} X(\ba) Z(\bb) \in \EE_n$ is
\begin{align*}
	\wQ(\Ee) := \wQ(\overline{\Ee}) &= \wQ(\ba | \bb)= \abs{ \{ 1 \leq i \leq n : a_i = 1 \mbox{ or } b_i=1\} } \\
	& = \abs{ \{ 1 \leq i \leq n : w_i \neq I_2 \}}.
\end{align*}
By definition, $\wQ(\Ee \, \Ee') \leq \wQ(\Ee) + \wQ(\Ee')$, for any $\Ee, \Ee' \in \EE_n$. We can define the \textbf{set of all error operators of weight at most $\delta$}\index{quantum error operators|seealso {Pauli matrices}} in $\EE_n$ and determine its cardinality. Let
\[
\EE_n(\delta) := \{ \Ee \in \EE_n : \wQ(\Ee) \leq \delta \} \mbox{ and }
\overline{\EE_n}(\delta) = \{ \overline{\Ee} \in \overline{\EE_n} : \wQ(\overline{\Ee}) \leq \delta \}.
\]
Then $\abs{\EE_n(\delta)}=4 \sum_{j=0}^\delta 3^j \binom{n}{j}$ and 
$\abs{\overline{\EE_n}(\delta)}= \sum_{j=0}^\delta 3^j \binom{n}{j}$. 

In the classical setup, both errors and codewords are vectors over the same field. In the quantum setup, errors are linear combinations of the tensor products\index{tensor product} of Pauli matrices\index{Pauli matrices}. A qubit\index{qubit} code $Q \subseteq \mathbb{C}^{2^n}$ has three parameters: its \textbf{length} $n$, \textbf{dimension} $K$ over $\mathbb{C}$, and \textbf{minimum distance}\index{quantum code!minimum distance of} $d=d(Q)$. We use 
\[
((n,K,d)) \mbox{ or } \dsb{n,k,d} \mbox{ with } k = \log_2 K
\] 
to signify that $Q$ describes the encoding of $k$ logical qubits as $n$ physical qubits, with $d$ being the smallest number of simultaneous errors that can transform a valid codeword into another.  

\begin{definition}[\textbf{Knill-Laflamme condition}~\cite{Knill1997}\index{Knill-Laflamme condition}] 
	A quantum code\index{quantum code} $Q$ can correct up to $\ell$ quantum errors if the followings hold. If $\ket{\varphi}, \ket{\psi} \in Q$ are distinguishable, \ie, $\braket{\varphi}{\psi}=0$, then $\bra{\varphi} \Ee_1 \Ee_2 \ket{\psi} = 0$, \ie, $\Ee_1 \ket{\varphi}$ and $\Ee_2 \ket{\psi}$ must remain distinguishable, for all $\Ee_1,\Ee_2 \in \EE_n(\ell)$. The minimum distance\index{quantum code!minimum distance of} of $Q$ is $d:=d(Q)$ if $\bra{\varphi} \Ee \ket{\psi}=0$ for all $\Ee \in \EE_n(d-1)$ and for all distinguishable $\ket{\varphi}, \ket{\psi} \in Q$.
\end{definition}

Given an $((n,K,d))$-qubit code $Q$ and an $\Ee \in \EE_n$, $\Ee \, Q$ is a subspace of $\mathbb{C}^{2^n}$. The fact that $Q$ corrects errors of weight up to $\ell = \lfloor \frac{d-1}{2} \rfloor$ does \emph{not} imply that the subspaces $\{\overline{\Ee} Q: \overline{\Ee} \in \overline{\EE_n}(\ell)\}$ are orthogonal to each other. It is possible that a codeword $\ket{\bv}$ is fixed by some $\Ee \neq I_{2^n}$, say, when $\ket{\bv}$ is an \textbf{eigenvector}\index{eigenvector} of $\Ee$ satisfying $\Ee \ket{\bv} = \alpha \ket{\bv}$ for some nonzero $\alpha \in \mathbb{C}$. If the subspaces $\{\overline{\Ee} \, Q: \overline{\Ee} \in \overline{\EE_n}(\ell)\}$ are orthogonal to each other, then $Q$ is said to be \textbf{pure}\index{quantum code!pure}. Otherwise, the code is \textbf{degenerate} or \textbf{impure}\index{quantum code!impure}.

To formally define a \textbf{qubit stabilizer code}\index{qubit}\index{quantum code!stabilizer}, we choose an abelian group\index{group!abelian} $G$, which is a subgroup of $\EE_n$, and associate $G$ with a classical code $C \subset \mathbb{F}_2^{2n}$, which is \textbf{self-orthogonal}\index{symplectic!self-orthogonal} under the symplectic inner product\index{symplectic!inner product|seealso{inner product, symplectic}}\index{inner product!symplectic}. The action of $G$ partitions $\mathbb{C}^{2^n}$ into a direct sum of $\chi$-eigenspaces\index{eigenspace} $Q(\chi)$ with $\chi \in \widehat{G}$. The properties of $Q := Q(\chi)$ follow from the properties of $C$ and $C^{\perp_s}$. The stabilizer formalism\index{quantum stabilizer formalism}, first introduced by Gottesman in his thesis~\cite{Gottesman97} and described in the language of group algebra by Calderbank \etal in~\cite{Calderbank1998}, remains the most widely-studied approach to control quantum errors. Ketkar \etal generalized the formalism to qudit\index{qudit} codes derived from classical codes over $\mathbb{F}_{q^2}$ in~\cite{Ketkar2006}.

Let $G$ be a finite abelian group\index{group!abelian} acting on a finite dimensional $\mathbb{C}$-vector space $V$. Each $g \in G$ is a Hermitian operator of $V$ and, for any $g,g' \in G$ and for all $\ket{\bv} \in V$, 
$(g g') \ket{\bv} = g (g'(\ket{\bv})) \mbox{ and } g g^{-1} (\ket{\bv}) = \ket{\bv}$. Let $\widehat{G}$ be the character\index{character} group\index{group} of $G$. For any $\chi \in \widehat{G}$, the map 
$L_{\chi} := \frac{1}{\abs{G}} \sum_{g \in G} \overline{\chi}(g) \, g$ 
is a linear operator over $V$. The set $\{L_{\chi}: \chi \in \widehat{G}\}$ is the \textbf{system of orthogonal primitive idempotent operators}. 

\begin{proposition}
	$L_{\chi}$ is idempotent, \ie, $L_{\chi}^2=L_{\chi}$ and $L_{\chi} L_{\chi'}=0$ if $\chi \neq \chi'$. The operators in the system sum to the identity $\sum_{\chi \in \widehat{G}} L_{\chi} = 1$. For all $g \in G$, we have $g \, L_{\chi} = \chi(g) \, L_{\chi}$.
\end{proposition}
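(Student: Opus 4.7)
The proof will be a direct exercise in character theory using the two orthogonality relations~\eqref{eq:ortho} together with the relation $\overline{\chi}(g)=\chi(g^{-1})$ recalled in the preliminaries. My plan is to prove the four assertions in the order $gL_\chi = \chi(g)L_\chi$, then $L_\chi L_{\chi'}=0$ for $\chi\neq\chi'$, then the idempotence $L_\chi^2=L_\chi$ (as a special case), and finally $\sum_\chi L_\chi = 1$.

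First I would establish the eigenvalue identity $gL_\chi = \chi(g) L_\chi$. Plugging in the definition, $gL_\chi = \tfrac{1}{|G|}\sum_{h\in G} \overline{\chi}(h)\,(gh)$, and reindexing by $h' := gh$ turns the sum into $\tfrac{1}{|G|}\sum_{h'\in G}\overline{\chi}(g^{-1}h')\,h'$. Since $\chi$ is a homomorphism, $\overline{\chi}(g^{-1}h')=\overline{\chi}(g^{-1})\,\overline{\chi}(h')=\chi(g)\,\overline{\chi}(h')$, and pulling the scalar $\chi(g)$ outside gives the claim. This reindexing step is the only real piece of ingenuity needed; once available it drives everything else.

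Next I compose two idempotents. Applying the identity just shown,
\[
L_\chi L_{\chi'} = \frac{1}{|G|}\sum_{g\in G}\overline{\chi}(g)\, g L_{\chi'}
= \frac{1}{|G|}\sum_{g\in G}\overline{\chi}(g)\,\chi'(g)\, L_{\chi'}
= \left(\frac{1}{|G|}\sum_{g\in G}\chi'(g)\,\overline{\chi}(g)\right) L_{\chi'}.
\]
The parenthesised scalar is precisely the first orthogonality relation in~\eqref{eq:ortho}, so it equals $1$ when $\chi=\chi'$ and $0$ otherwise. This simultaneously yields $L_\chi L_{\chi'}=0$ for $\chi\neq\chi'$ and the idempotence $L_\chi^2=L_\chi$.

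Finally, for the completeness relation $\sum_{\chi\in\widehat G} L_\chi = 1$, I would swap the order of summation: $\sum_{\chi}L_\chi = \tfrac{1}{|G|}\sum_{g\in G}\bigl(\sum_{\chi\in\widehat G}\overline{\chi}(g)\bigr)\,g = \tfrac{1}{|G|}\sum_{g\in G}\bigl(\sum_{\chi\in\widehat G}\chi(g^{-1})\chi(1_G)\bigr)\,g$. The inner sum is the second orthogonality relation in~\eqref{eq:ortho} (with $h=1_G$), which vanishes except at $g=1_G$, where it equals $|G|$. Hence only the $g=1_G$ term survives and it contributes the identity operator $1_G$ acting on $V$. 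The anticipated obstacle is purely notational rather than conceptual: one must keep the conjugation, the substitution $h'=gh$, and the identification $\overline{\chi}(g^{-1})=\chi(g)$ straight, and must remember to invoke the \emph{correct} orthogonality relation (sum over $G$ for step two, sum over $\widehat G$ for step three).
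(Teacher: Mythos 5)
Your proof is correct and uses essentially the same approach as the paper: the same reindexing substitution $h\mapsto gh$ and the same two orthogonality relations from~\eqref{eq:ortho}. The only difference is organizational --- you prove $gL_\chi=\chi(g)L_\chi$ first and use it to collapse $L_\chi L_{\chi'}$ to a single scalar sum, whereas the paper expands the double sum directly; this is a mild streamlining, not a different argument.
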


\begin{proof} In $G$, let $g h =a$, \ie, $h=a g^{-1}$. Using $\overline{\chi}=\chi^{-1}$ and the orthogonality of characters\index{character}\index{orthogonality relations}, we write
	\begin{multline}\label{eq:idem}
		L_{\chi} L_{\chi'} 
		= \frac{1}{\abs{G}^2} \sum_{g \in G} 
		\overline{\chi} (g) \, g ~ \sum_{h \in G} \overline{\chi'} (h) \, h 
		= \frac{1}{\abs{G}^2} \sum_{a,g \in G} \overline{\chi}(g) ~
		\overline{\chi'}(a g^{-1})~ a \\
		= \frac{1}{\abs{G}^2} \sum_{a \in G} \overline{\chi'}(a) \,a ~ 
		\sum_{g \in G} (\overline{\chi} \, \chi') (g).
	\end{multline}
	The third equality comes from collecting terms that contain only $a$ and only $g$. By the first orthogonality relation\index{orthogonality relation} in Equation (\ref{eq:ortho}), one arrives at
	\[
	L_{\chi} L_{\chi'} = \frac{1}{\abs{G}} 
	\sum_{a \in G} \overline{\chi'}(a) a =
	\begin{cases}
		0 \mbox{, if } \chi \neq \chi',\\
		L_{\chi} \mbox{, by definition.}
	\end{cases}
	\]
	We verify the second assertion by using the second orthogonality relation\index{orthogonality relation} in Equation (\ref{eq:ortho}). Since $\overline{\chi}(1)=1$, we obtain
	\[
	\sum_{\chi \in \widehat{G}} L_{\chi} 
	= \frac{1}{\abs{G}} \sum_{\chi \in \widehat{G}} \sum_{g \in G} \overline{\chi}(g) \, g 
	= \frac{1}{\abs{G}} \sum_{g \in G} g \sum_{\chi \in \widehat{G}} \overline{\chi}(g) \, \overline{\chi}(1)= 1.
	\]
	Using $g \, h = a$, the definition of $L_{\chi}$, and the equality $\overline{\chi} (g^{-1}) = \chi(g)$, one gets
	\begin{align*}
		g \, L_{\chi} &= \frac{1}{\abs{G}} \sum_{h \in G} \overline{\chi} (h)\, g \, h 
		= \frac{1}{\abs{G}} \sum_{a \in G} \overline{\chi} (a \cdot g^{-1}) \,a \\
		&= \frac{1}{\abs{G}} \overline{\chi} (g^{-1})  \sum_{a \in G} \overline{\chi} (a) \, a = \chi(g) \, L_{\chi}.
	\end{align*}
\end{proof}

\begin{proposition}
	For each $\chi \in \widehat{G}$, let 
	$V(\chi) := L_{\chi} V = \{L_{\chi}(\ket{\bv}) : \ket{\bv} \in V \}$. For $\ket{\bv} \in V(\chi)$ and $g \in G$, we have $g \ket{\bv} = \chi(g) \, \ket{\bv}$. 
	Thus, $V(\chi)$ is a common eigenspace\index{eigenspace} of all operators in $G$. 
	A direct decomposition $V = \bigoplus_{\chi \in \widehat{G}} V(\chi)$ 
	ensures that each $\ket{\bv} \in V$ has a unique expression 
	\[
	\ket{\bv} = \sum_{\chi \in \widehat{G}} \ket{\bv}_{\chi} \mbox{, where } 
	\ket{\bv}_{\chi} \in V(\chi).
	\]
\end{proposition}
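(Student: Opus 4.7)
The plan is to leverage the three properties of the $L_{\chi}$ operators already proved in the preceding proposition, namely (i) $g \, L_{\chi} = \chi(g) \, L_{\chi}$ for all $g \in G$, (ii) $L_{\chi} L_{\chi'} = 0$ when $\chi \neq \chi'$ and $L_{\chi}^2 = L_{\chi}$, and (iii) $\sum_{\chi \in \widehat{G}} L_{\chi} = 1$. With these in hand, essentially no additional computation is required; the whole statement is a bookkeeping exercise.

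First, I would verify the common eigenspace property. Take $\ket{\bv} \in V(\chi)$. By definition there is some $\ket{\bw} \in V$ with $\ket{\bv} = L_{\chi} \ket{\bw}$. Then, applying (i),
\[
g \ket{\bv} = g \, L_{\chi} \ket{\bw} = \chi(g) \, L_{\chi} \ket{\bw} = \chi(g) \ket{\bv},
\]
which shows that every $g \in G$ acts on $V(\chi)$ as the scalar $\chi(g)$. In passing, one can also note from (ii) that $L_{\chi} \ket{\bv} = L_{\chi}^2 \ket{\bw} = L_{\chi} \ket{\bw} = \ket{\bv}$, so $L_{\chi}$ restricts to the identity on $V(\chi)$, a fact I will need for uniqueness.

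Next, I would produce the decomposition itself. Existence follows immediately from (iii): for any $\ket{\bv} \in V$,
\[
\ket{\bv} = \left( \sum_{\chi \in \widehat{G}} L_{\chi} \right) \ket{\bv} = \sum_{\chi \in \widehat{G}} L_{\chi} \ket{\bv},
\]
and setting $\ket{\bv}_{\chi} := L_{\chi} \ket{\bv} \in V(\chi)$ gives the required expression. For uniqueness, it suffices to show that if $\sum_{\chi \in \widehat{G}} \ket{\bv}_{\chi} = 0$ with each $\ket{\bv}_{\chi} \in V(\chi)$, then every $\ket{\bv}_{\chi}$ vanishes. Applying $L_{\chi'}$ to both sides and using $L_{\chi'} \ket{\bv}_{\chi} = L_{\chi'} L_{\chi} \ket{\bv}_{\chi} = 0$ for $\chi \neq \chi'$ together with $L_{\chi'} \ket{\bv}_{\chi'} = \ket{\bv}_{\chi'}$, one obtains $\ket{\bv}_{\chi'} = 0$ for every $\chi' \in \widehat{G}$.

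There is no genuine obstacle here; each assertion unwinds directly from one of the three prerequisites. The only step that requires a moment of care is the uniqueness argument, where one must remember that $L_{\chi'}$ acts as the identity on $V(\chi')$ (an observation already extracted in the first step), so that applying $L_{\chi'}$ to the hypothetical null combination isolates the $\chi'$-component cleanly.
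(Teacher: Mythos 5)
Your proposal is correct and follows essentially the same route as the paper: the eigenspace property from $g\,L_{\chi}=\chi(g)\,L_{\chi}$, existence of the decomposition from $\sum_{\chi}L_{\chi}=1$, and uniqueness by applying the projections and invoking $L_{\chi}L_{\chi'}=0$ for $\chi\neq\chi'$ together with idempotence. The only cosmetic difference is that you phrase uniqueness as ``a vanishing sum of components has all components zero,'' whereas the paper applies $L_{\chi}$ directly to an arbitrary second representation; these are the same argument.
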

\begin{proof}
	For $\ket{\bv} \in V(\chi)$ and $g \in G$, there exists $\ket{\bw} \in V$ such that 
	\[
	g \ket{\bv} = g \, L_{\chi} (\ket{\bw}) = \chi(g) \, L_{\chi} (\ket{\bw}) = \chi(g) \ket{\bv},
	\]
	confirming the first assertion.
	
	For each $\ket{\bv} \in V$, we write 
	$\ket{\bv} = \left(\sum_{\chi \in \widehat{G}} L_{\chi}\right) \ket{\bv} 
	=\sum_{\chi \in \widehat{G}} \ket{\bv}_{\chi}$, where 
	$\ket{\bv}_{\chi} := L_{\chi}(\ket{\bv}) \in V(\chi)$. On the other hand, if $\ket{\bv} = \sum_{\chi \in \widehat{G}} \ket{\bu_{\chi}}$ for $\ket{\bu_{\chi}} \in V(\chi)$, then $\ket{\bu_{\chi}} = L_{\chi} \ket{\bw_{\chi}}$ for some $\ket{\bw_{\chi}} \in V$. Since $\{L_{\chi} : \chi \in \widehat{G}\}$ has the orthogonality property, for every $\chi \in \widehat{G}$, we have
	\begin{align*}
		\ket{\bv}_{\chi} &= L{\chi} ~\ket{\bv} = L_{\chi} \left(\sum_{\chi' \in \widehat{G}} \ket{\bu_{\chi'}}\right) = L_{\chi} \left(\sum_{\chi' \in \widehat{G}} L_{\chi'} \ket{\bw_{\chi'}}\right)\\
		&=
		\sum_{\chi' \in \widehat{G}} \left(L_{\chi} L_{\chi'} \ket{\bw_{\chi'}}\right) 
		= L_{\chi} \ket{\bw_{\chi}} = \ket{\bu_{\chi}}.
	\end{align*}
	Thus, $V = \bigoplus_{\chi \in \widehat{G}} V(\chi)$. 
\end{proof}

It is also a well-known fact that $V(\chi)$ and $V(\chi')$ are Hermitian orthogonal for all $\chi \neq \chi' \in \widehat{G}$ when all $g \in G$ are unitary linear operators on $V$.

All the tools to connect qubit\index{qubit} stabilizer codes\index{quantum code!stabilizer} to classical codes are now in place. We choose
$G:=\left\langle g_1,g_2,\ldots,g_k\right\rangle$ to be an abelian subgroup of $\EE_n$ with 
$g_j := i^{\lambda_j} ~X(\ba_j)~ Z(\bb_j)$ for $1 \leq j \leq k $, where $\ba_j, \bb_j \in \mathbb{F}_2^n$ and $\lambda_j \equiv \ba_j \cdot \bb_j \pmod{2}$. Since $\sigma_x$ and $\sigma_z$ are \textbf{Hermitian unitary matrices}\index{Hermitian unitary matrices}, $X(\ba_j)$ and $Z(\bb_j)$ are also Hermitian matrices. 
The basis element $g_j$ is Hermitian since
\begin{align}\label{eq:Hbasis}
	g_j^{\dagger} := \overline{g}_j^{\top} &= (-i)^{\lambda_j}~ Z(\bb_j)^{\top}~X(\ba_j)^{\top}
	= (-i)^{\lambda_j}~ Z(\bb_j)~ X(\ba_j) \notag\\
	& = i^{\lambda_j}~ (-1)^{\ba_j \cdot \bb_j}~ (-1)^{\ba_j \cdot \bb_j}~ X(\ba_j) ~Z(\bb_j) = i ^{\lambda_j} ~ X(\ba_j) ~Z(\bb_j).  
\end{align}

\begin{theorem}\label{thm:symplectic}
	Let $C$ be an $n-k$-dimensional self-orthogonal\index{symplectic!self-orthogonal} subspace of $\mathbb{F}_2^{2n}$ under the symplectic inner product\index{symplectic!inner product|seealso{inner product, symplectic}}\index{inner product!symplectic}. Let 
	$d:= w_Q\left(C^{\perp_s} \setminus C\right) = \min \{w_Q(\bv) : \bv \in C^{\perp_s} \setminus C \}$. Then there is an $\dsb{n,k,d}$-qubit\index{qubit} \emph{stabilizer} code\index{quantum code!stabilizer} $Q$.
\end{theorem}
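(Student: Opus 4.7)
The plan is to upgrade $C$ to an abelian stabilizer subgroup $G \leq \EE_n$, apply the eigenspace decomposition of the previous proposition, and verify the Knill-Laflamme condition by a trichotomy on where the symplectic representative of a prospective error sits relative to $C$ and $C^{\perp_s}$. I would take $Q$ to be one of the common eigenspaces of $G$ and show its distance is at least $d$.

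First I would pick an $\mathbb{F}_2$-basis $\{(\ba_j \mid \bb_j)\}_{j=1}^{n-k}$ of $C$ and lift each basis vector to $g_j := i^{\lambda_j} X(\ba_j) Z(\bb_j)$ with $\lambda_j \equiv \ba_j \cdot \bb_j \pmod{2}$. Equation~(\ref{eq:Hbasis}) then makes each $g_j$ Hermitian, and a direct check gives $g_j^2 = I_{2^n}$. Symplectic self-orthogonality of $C$, combined with the commutation identity $\EE\,\EE' = (-1)^{\inner{(\ba|\bb),(\ba'|\bb')}_s}\,\EE'\,\EE$ from Section~\ref{sec:stab}, forces the generators to pairwise commute, so $G := \langle g_1,\ldots,g_{n-k}\rangle$ is an elementary abelian group isomorphic to $(\mathbb{F}_2^{n-k},+)$ of order $2^{n-k}$. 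The preceding proposition then decomposes $V := \mathbb{C}^{2^n} = \bigoplus_{\chi \in \widehat{G}} V(\chi)$ into common eigenspaces, which the remark right after it makes pairwise Hermitian-orthogonal, and a standard character count forces $\dim V(\chi) = 2^n/\abs{G} = 2^k$ for every $\chi$. I would then fix, say, the trivial character $\chi_0 \in \widehat{G}$ and set $Q := V(\chi_0)$, giving length $n$ and dimension $K = 2^k$ at once.

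For the distance, I would take any $\EE = i^{\mu}X(\ba)Z(\bb)$ with $\wQ(\EE) < d$ and any distinguishable $\ket{\varphi},\ket{\psi} \in Q$, and show $\bra{\varphi}\EE\ket{\psi} = 0$. The weight bound gives $(\ba \mid \bb) \notin C^{\perp_s} \setminus C$, leaving two sub-cases. If $(\ba\mid \bb) \in C$, then $\EE$ is a phase scalar times some $g \in G$, acts on $Q$ as $c\,\chi_0(g)$, and the bracket collapses to a multiple of $\braket{\varphi}{\psi} = 0$. Otherwise $(\ba\mid \bb) \notin C^{\perp_s}$, so $\EE$ anticommutes with some $g \in G$; since $g^2 = I_{2^n}$ and $g$ is Hermitian, the eigenvalue $\chi_0(g) \in \{\pm 1\}$ is real. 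Evaluating $\bra{\varphi}\EE g\ket{\psi}$ in two ways—once via $g\ket{\psi} = \chi_0(g)\ket{\psi}$, once via anticommutation together with $\bra{\varphi}g = \chi_0(g)\bra{\varphi}$—yields $\chi_0(g)\bra{\varphi}\EE\ket{\psi} = -\chi_0(g)\bra{\varphi}\EE\ket{\psi}$, so the bracket vanishes.

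The main obstacle is not any single calculation but the bookkeeping across the three disjoint layers $\overline{C} \subset \overline{C^{\perp_s}} \subset \overline{\EE_n}$, matching each layer with the correct mechanism for killing the Knill-Laflamme bracket: orthogonality of the codewords in the inner layer, and anticommutation combined with Hermiticity of generators in the outer layer. The middle layer $C^{\perp_s}\setminus C$ is precisely the obstruction to a larger distance and explains the choice $d = \wQ(C^{\perp_s}\setminus C)$: errors represented there commute with all of $G$ yet act nontrivially on $Q$, producing a genuinely \emph{impure} stabilizer code in general.
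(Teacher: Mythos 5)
Your proposal follows the paper's proof almost step for step: the same Hermitian lift of a symplectic basis of $C$ to generators $g_j = i^{\lambda_j}X(\ba_j)Z(\bb_j)$ with $\lambda_j \equiv \ba_j\cdot\bb_j \pmod{2}$, the same decomposition $\mathbb{C}^{2^n} = \bigoplus_{\chi\in\widehat{G}} Q(\chi)$ into common eigenspaces, and the same trichotomy on whether $\overline{\Ee}$ lies in $C$, in $C^{\perp_s}\setminus C$, or outside $C^{\perp_s}$ for the Knill--Laflamme check. Your two-way evaluation of $\bra{\varphi}\Ee g\ket{\psi}$ in the anticommuting case is an equivalent repackaging of the paper's observation that $\Ee$ sends $Q(\chi)$ into a different, Hermitian-orthogonal eigenspace $Q(\chi_{\overline{\Ee}}\,\chi)$.

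The one place where your writeup has a genuine gap is the assertion that ``a standard character count forces $\dim V(\chi)=2^n/\abs{G}=2^k$ for every $\chi$.'' The direct sum only gives $\sum_{\chi\in\widehat{G}}\dim V(\chi)=2^n$; for a general abelian group of unitaries some isotypic components can vanish (e.g.\ a group acting trivially), so equidistribution of the dimensions is not automatic and is precisely what must be proved. The paper supplies the missing argument: every $\Ee\in\EE_n$ maps $Q(\chi)$ bijectively onto $Q(\chi_{\overline{\Ee}}\,\chi)$, where $\chi_{\overline{\Ee}}(g)=(-1)^{\inner{\overline{g},\overline{\Ee}}_s}$, and the map $\overline{\Ee}\mapsto\chi_{\overline{\Ee}}$ is onto $\widehat{G}$ by non-degeneracy of the symplectic form, so all $2^{n-k}$ eigenspaces have the same dimension, namely $2^k$. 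Without this step your chosen $Q=V(\chi_0)$ is not known to have dimension $2^k$, nor even to be nonzero. The rest of your argument is sound.
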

\begin{proof}
	We lift $C:= \overline{G} \in \overline{\EE_n}$ to an abelian subgroup $G$ of $\EE_n$, with $G \cong \mathbb{F}_2^{n-k}$. Then $\mathbb{C}^{2^n} = \bigoplus_{\chi \in \widehat{G}} Q(\chi)$, where $Q(\chi)= L_{\chi} \mathbb{C}^{2^n}$, is the subspace 
	\[
	\{ \ket{\bv} \in \mathbb{C}^{2^n} : g \ket{\bv} = \chi(g) \, \ket{\bv} \mbox{ for all } g \in G\}.
	\]
	Showing that each $Q(\chi)$ is an $\dsb{n,k,d}$-qubit\index{qubit} code means proving 
	\[
	\dim_{\mathbb{C}} Q(\chi) = 2^{k} \mbox{ and } 
	d(Q(\chi)) \geq w_{Q} \left(C^{\perp_s} \setminus C \right).
	\]
	
	Consider the action of $\EE_n$ on $\{Q(\chi) :\chi \in \widehat{G}\}$. For any $\ket{\bv} \in Q(\chi)$ and $g \in G$, we have 
	$g \ket{\bv} = \chi(g) \, \ket{\bv}$. Thus, for any $\Ee \in \EE_n$ and any $g \in G$, 
	\[
	g (\Ee \ket{\bv}) =(-1)^{\inner{\overline{g},\overline{\Ee}}_s} \, 
	\Ee (g \ket{\bv}) =(-1)^{\inner{\overline{g},\overline{\Ee}}_s} \, \chi(g) \, \Ee \ket{\bv}. 
	\]
	Since $\chi_{\overline{\Ee}} : G \mapsto \{\pm 1\}$ and 
	$\chi_{\overline{\Ee}} (g) = (-1)^{\inner{\overline{g},\overline{\Ee}}_s}$ is a character\index{character} of $G$, we have
	\[
	g (\Ee \ket{\bv}) = \chi_{\overline{\Ee}}(g) \, \chi(g) \, \Ee \ket{\bv} = \chi'(g) \, \Ee \ket{\bv}\mbox{, for all }g \in G.
	\]
	This implies $\Ee \ket{\bv} \in Q(\chi')$ and $\Ee: Q(\chi) \mapsto Q(\chi')$, where $\chi' := \chi_{\overline{\Ee}} \, \chi$. Since $\EE_n$ is a group\index{group}, $\Ee$ is a bijection, making $\dim_{\mathbb{C}} Q(\chi) = \dim_{\mathbb{C}} Q(\chi')$. As $\Ee$ runs through $\EE_n$, $\chi_{\overline{\Ee}}$ takes all characters\index{character} of $G$, ensuring that $\dim_{\mathbb{C}} Q(\chi)$ is the same for all $\chi \in \widehat{G}$. Thus, $\dim_{\mathbb{C}} Q(\chi) =2^{n-(n-k)}=2^k$ for any $\chi \in \widehat{G}$.
	
	We now show that, if $\Ee \in \EE_{d-1}$ and $\ket{\bv_1}, \ket{\bv_2} \in Q(\chi)$ with $\braket{\bv_1}{\bv_2}=0$, then $\bra{\bv_1} \Ee_1 \, \Ee_2 \ket{\bv_2}=0$, where $\Ee:=\Ee_1 \, \Ee_2$. If $\overline{\Ee} \in \overline{G} = C$, then $\bra{\bv_1} \Ee \ket{\bv_2} = \chi(\overline{\Ee}) (\braket{\bv_1}{\bv_2})=0$. Otherwise, $\overline{\Ee} \notin C$. From $w_{Q}(\overline{\Ee})=w_{Q}(\Ee) \leq d-1$ and the assumption $\left(C^{\perp_s} \setminus C\right) \cap \overline{\EE}_{d-1} = \emptyset$, we know $\overline{\Ee} \notin C^{\perp_s}$. Hence, there exists $\overline{\Ee}' \in \overline{G}$ such that $\Ee \, \Ee' = - \Ee' \, \Ee$. Then, for $\ket{\bv_2} \in Q(\chi)$, we have 
	\[
	\Ee' \, \Ee \ket{\bv_2} = - \Ee \, \Ee' \ket{\bv_2} = - \chi(\Ee') \Ee \ket{\bv_2} \mbox{, with } 
	- \chi(\Ee') \neq \chi(\Ee').
	\]
	Therefore, $\Ee \ket{\bv_2} \in Q(\chi')$, with $\chi' \neq \chi$. Since $\ket{\bv_1} \in Q(\chi)$ and $Q(\chi)$ is orthogonal to $Q(\chi')$, we confirm $\bra{\bv_1} \Ee \ket{\bv_2}=0$.
\end{proof}

\begin{example}\label{ex:512}
	We exhibit a $\dsb{5,1,3}$-qubit\index{qubit} stabilizer code\index{quantum code!stabilizer} $Q$. Consider a subspace $C \subset \mathbb{F}_2^{10}$ with generator matrix 
	\[
	\begin{bmatrix}
		1 & 1 & 0 & 0 & 0 & 0 & 0 & 1 & 0 & 1 \\
		0 & 1 & 1 & 0 & 0 & 1 & 0 & 0 & 1 & 0 \\
		0 & 0 & 1 & 1 & 0 & 0 & 1 & 0 & 0 & 1 \\
		0 & 0 & 0 & 1 & 1 & 1 & 0 & 1 & 0 & 0   
	\end{bmatrix}=
	\begin{bmatrix}
		\bv_1\\
		\bv_2\\
		\bv_3\\
		\bv_4 
	\end{bmatrix}.
	\]
	One reads $\bv_1=(\ba|\bb)$ as having $\ba=(1,1,0,0,0)$ and $\bb=(0,0,1,0,1)$. The code $C$ is symplectic self-orthogonal\index{symplectic!self-orthogonal}, with $\dim_{\mathbb{F}_2} C=4 $ and $\dim_{\mathbb{F}_2} C^{\perp} =6$, \ie, the codimension is $2$. To extend the basis for $C$ to a basis for $C^{\perp_s}$ we use $(0,0,0,0,0,\,1,1,1,1,1)$ and $(1,1,1,1,1,\,0,0,0,0,0)$. Since $w_{Q}(C)=4$ and $w_{Q}(C^{\perp_s})=3$, one obtains $w_Q(C^{\perp_s} \setminus C)=3$. We can write the $\dsb{5,1,3}$-code $Q= Q(\chi_0)$ explicitly by using $G=\left\langle g_1,g_2,g_3,g_4\right\rangle$, with $g_1 =\sigma_x \otimes \sigma_x \otimes \sigma_z \otimes I_2 \otimes \sigma_z$, 
	$g_2 =\sigma_z \otimes \sigma_x \otimes \sigma_x \otimes \sigma_z \otimes I_2$, 
	$g_3 =I_2 \otimes \sigma_z \otimes \sigma_x \otimes \sigma_x \otimes \sigma_z$, and 
	$g_4 =\sigma_z \otimes  I_2 \otimes \sigma_z \otimes \sigma_x \otimes \sigma_x$.
	
	Since $k=1$ and $\dim_{\mathbb{C}} Q = 2^k=2$, two independent vectors in $\mathbb{C}^{32}$ form a basis of 
	$Q = \{ \ket{\bv} \in \mathbb{C}^{32} : g \ket{\bv} = \chi_0 (g) \ket{\bv} 
	\, \forall g \in G \}$. $Q$ consists of vectors which are fixed by all $g \in G$. After some computation, we conclude that $Q$ can be generated by $\ket{\bv_0}=\sum_{g \in G} g \ket{00000}$ and $\ket{\bv_1}=\sum_{g \in G} g \ket{11111}$. \quad \qedsymbol
\end{example}

With minor modifications, the qubit stabilizer formalism\index{quantum stabilizer formalism} extends to the general qudit case. A complete treatment is available in~\cite{Ketkar2006}. We outline the main steps here. An \textbf{$n$-qudit system}\index{qudit} is a nonzero element in $\left(\mathbb{C}^q\right)^{\otimes n} \cong \mathbb{C}^{q^n}$. Let $\ba=(a_1,\ldots,a_{n}) \in \mathbb{F}_q^n$. The standard $\mathbb{C}$-basis is 
\[
\{\ket{a_1 a_2 \ldots a_{n}} :=
\ket{a_1} \otimes \ket{a_2} \otimes \ldots \otimes \ket{a_{n}}: \ba \in \mathbb{F}_q^n\} 
\]
and an arbitrary vector in $\mathbb{C}^{q^n}$ is written $\ket{\psi} = \sum_{\ba \in \mathbb{F}_q^n} c_{\ba} \ket{\ba}$, with $c_{\ba} \in \mathbb{C}$ and $q^{-n} \sum_{\ba \in \mathbb{F}_q^n} \norm{c_{\ba}}^2=1$.

Let $\ba=(a_1,\ldots,a_n), \bb=(b_1,\ldots,b_n) \in \mathbb{F}_q^n$, and $\omega := e^{\frac{2 \pi i}{p}}$, where $q=p^m$. The error operators\index{quantum error operators|seealso {Pauli matrices}} form $\EE_n:= \{ \omega^{\beta}  X(\ba) Z(\bb) : \ba, \bb \in \mathbb{F}_q^n, \, \beta \in \mathbb{F}_p\}$ of cardinality $p q^{2n}$. The respective actions of $X(\ba)$ and $Z(\bb)$ on $\ket{\bv} \in \mathbb{C}^{q^n}$ are 
$X(\ba) \ket{\bv} = \ket{\ba + \bv}$ and 
$Z(\bb) \ket{\bv} = (\omega)^{\Tr (\bb \cdot \bv)} \ket{\bv}$. Hence, for 
$\Ee := \omega^{\beta} X(\ba) Z(\bb)$ and $\Ee' := \omega^{\beta'} X(\ba') Z(\bb')$ in $\EE_n$, one gets $\Ee \, \Ee' = \omega^{\Tr (\bb \cdot \ba' - \bb' \cdot \ba)} \, \Ee' \, \Ee$. The \textbf{symplectic weight}\index{symplectic!weight} of $(\ba|\bb)$ is the quantum weight\index{quantum weight} of $\Ee$. 

The \textbf{(trace) symplectic inner product}\index{symplectic!inner product|seealso{inner product, symplectic}}\index{inner product!symplectic} of $(\ba|\bb)$ and $(\ba'|\bb')$ in $\mathbb{F}_q^{2n}$ is
\begin{equation}\label{eq:symplectic}
	\inner{(\ba|\bb), (\ba'|\bb')}_s = \Tr(\bb \cdot \ba' - \bb' \cdot \ba). 
\end{equation}
The \textbf{symplectic dual} of $C \subseteq \mathbb{F}_q^{2n}$ is 
$C^{\perp_s} = \{ \bu \in \mathbb{F}_q^{2n} : \inner{\bu,\bc}_s=0 \, \forall \, \bc \in C \}$. As in the qubit\index{qubit} case, in the general qudit\index{qudit} setup, a subgroup $G$ of $\EE_n$ is abelian if and only if $\overline{G}$ is a symplectic self-orthogonal\index{symplectic!self-orthogonal} subspace of $\overline{\EE_n} \cong \mathbb{F}_q^{2n}$. The analogue of Theorem~\ref{thm:symplectic} follows.

\begin{theorem}\label{thm:qudit_sympl}
	Let $C$ be an $n-k$-dimensional self-orthogonal subspace of $\mathbb{F}_q^{2n}$ under the (trace) symplectic inner product\index{symplectic!inner product|seealso{inner product, symplectic}}\index{inner product!symplectic}. Let $d:= w_Q\left(C^{\perp_s} \setminus C\right) = \min \{w_Q(\bv) : \bv \in C^{\perp_s} \setminus C \}$. Then there is an $\dsb{n,k,d}$-qudit\index{qudit} \emph{stabilizer}\index{quantum code!stabilizer} code $Q$.
\end{theorem}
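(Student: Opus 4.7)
The plan is to reproduce the argument of Theorem~\ref{thm:symplectic}, replacing the two-element sign system $\{\pm 1\}$ that arose from the binary commutator by the $p$-th root of unity $\omega^{\inner{\overline{g},\overline{\Ee}}_s}$ coming from the trace symplectic form, and replacing the target dimension $2^k$ by $q^k$. All the character-theoretic infrastructure (the primitive idempotents $L_\chi$, the spectral decomposition $V = \bigoplus_{\chi \in \widehat{G}} V(\chi)$, and the Hermitian orthogonality of distinct $\chi$-eigenspaces) was set up earlier for an arbitrary finite abelian group acting unitarily on a Hilbert space, so it carries over to the qudit setting without modification once a suitable lift $G$ of $C$ is in hand.

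First, I would lift $C$ to an abelian subgroup $G$ of $\EE_n$ of order $q^{n-k}$. Pick a basis $\{(\ba_j|\bb_j)\}_{j=1}^{n-k}$ of $C$ and set $g_j := \omega^{\beta_j} X(\ba_j) Z(\bb_j)$, choosing the phases $\beta_j \in \mathbb{F}_p$ so that each $g_j$ is Hermitian, following the qudit analogue of Equation~(\ref{eq:Hbasis}). The trace-symplectic self-orthogonality of $C$, combined with $\Ee \, \Ee' = \omega^{\Tr(\bb \cdot \ba' - \bb' \cdot \ba)} \, \Ee' \, \Ee$, ensures that the $g_j$ pairwise commute, so $G = \langle g_1, \ldots, g_{n-k} \rangle$ is abelian with $\overline{G} = C$ and $|G| = q^{n-k}$. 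The two propositions on $L_\chi$ then yield $\mathbb{C}^{q^n} = \bigoplus_{\chi \in \widehat{G}} Q(\chi)$, where $Q(\chi) = \{\ket{\bv} \in \mathbb{C}^{q^n} : g \ket{\bv} = \chi(g) \ket{\bv} \mbox{ for all } g \in G\}$.

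Next, I would show $\dim_{\mathbb{C}} Q(\chi) = q^k$ for every $\chi$. For any $\overline{\Ee} \in \overline{\EE_n}$, the map $\chi_{\overline{\Ee}} : g \mapsto \omega^{\inner{\overline{g}, \overline{\Ee}}_s}$ is a character of $G$ because the trace symplectic form is $\mathbb{F}_p$-bilinear. A one-line computation with the commutation rule gives $g (\Ee \ket{\bv}) = \chi_{\overline{\Ee}}(g) \, \chi(g) \, \Ee \ket{\bv}$, so $\Ee$ sends $Q(\chi)$ bijectively onto $Q(\chi_{\overline{\Ee}} \, \chi)$. The assignment $\overline{\Ee} \mapsto \chi_{\overline{\Ee}}$ is an abelian-group homomorphism $\overline{\EE_n} \to \widehat{G}$ whose kernel is exactly $C^{\perp_s}$; its image therefore has cardinality $q^{2n}/q^{n+k} = q^{n-k} = |\widehat{G}|$, so the map is surjective. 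Consequently all $Q(\chi)$ share a common dimension, which must equal $q^n / |\widehat{G}| = q^k$.

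Finally, for the distance, I would fix $\Ee = \Ee_1 \Ee_2 \in \EE_n$ with $w_Q(\Ee) \leq d-1$ and orthogonal $\ket{\bv_1}, \ket{\bv_2} \in Q(\chi)$, and argue $\bra{\bv_1} \Ee \ket{\bv_2} = 0$ in two cases. If $\overline{\Ee} \in C$, then $\Ee \ket{\bv_2}$ is a scalar multiple of $\ket{\bv_2}$, so the claim follows from $\braket{\bv_1}{\bv_2} = 0$. Otherwise, $w_Q(\overline{\Ee}) \leq d-1$ together with $d = w_Q(C^{\perp_s} \setminus C)$ forces $\overline{\Ee} \notin C^{\perp_s}$, so some $\overline{g} \in C$ satisfies $\inner{\overline{g}, \overline{\Ee}}_s \neq 0$ in $\mathbb{F}_p$, making $\chi_{\overline{\Ee}}(g) \neq 1$. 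By the previous step, $\Ee \ket{\bv_2} \in Q(\chi')$ with $\chi' := \chi_{\overline{\Ee}} \, \chi \neq \chi$, and the Hermitian orthogonality of $Q(\chi)$ and $Q(\chi')$ kills the matrix element. The main obstacle, and the only genuinely non-routine point, is the Hermitian lift in the first step: in characteristic $p$ one must track the full trace $\Tr(\ba_j \cdot \bb_j) \in \mathbb{F}_p$ rather than a single parity bit, and then verify that the $g_j$ generate a group of the correct order $q^{n-k}$ without collapsing into the centre $\langle \omega I_{q^n}\rangle$. Once this lift is secured, Steps~2 and~3 are direct translations of the qubit argument.
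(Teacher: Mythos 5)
Your proposal is correct and follows exactly the route the paper intends: the text gives no separate proof of Theorem~\ref{thm:qudit_sympl} but states that ``the analogue of Theorem~\ref{thm:symplectic} follows,'' and your argument is precisely that translation --- lifting $C$ to an abelian $G \leq \EE_n$ of order $q^{n-k}$, reusing the idempotent decomposition $\mathbb{C}^{q^n} = \bigoplus_{\chi} Q(\chi)$, showing $\dim_{\mathbb{C}} Q(\chi) = q^k$ via surjectivity of $\overline{\Ee} \mapsto \chi_{\overline{\Ee}}$ with kernel $C^{\perp_s}$, and running the two-case distance argument. Your closing remark correctly isolates the only point needing care beyond the qubit case, namely choosing the phases in the lift so that $G$ meets the centre trivially.
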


%The following quote nicely summarizes the formalism.
\begin{VF}
	``With group\index{group} and eigenstate\index{eigenstate}, we've learned to fix\\
	Your quantum errors with our quantum tricks.''
	
	\VA{Daniel Gottesman}{in {\it Quantum Error Correction Sonnet}}
\end{VF}

\section{Constructions via Classical Codes}

Any stabilizer code\index{quantum code!stabilizer} $Q$ is fully characterized by its stabilizer group\index{group!stabilizer}, that specifies the set of errors that $Q$ can correct. Any linear combination of the operators in the error set is correctable, allowing $Q$ to correct a continuous set of operators. For this reason, the best-known qubit\index{qubit} codes in the online table~\cite{Grassl} maintained by M. Grassl are given in terms of their stabilizer generators. The stabilizer approach has massive advantages over other frameworks, some of which will be mentioned below. It describes a large set of QECs, complete with their encoding and decoding mechanism, in a very compact form. 

A valid codeword of $Q$ is a $+ 1$ \textbf{eigenvector}\index{eigenvector} of \emph{all} the stabilizer generators. An error $\Ee$, expressed as a tensor product\index{tensor product} of Pauli operators, anticommutes with some of the stabilizer generators and commutes with others. It sends a codeword to an \textbf{eigenstate}\index{eigenstate} of the stabilizer generators. The \textbf{eigenvalue} remains $+1$ for all operators that commute with $\Ee$ but becomes $-1$ for those generators that anticommute with $\Ee$. From the resulting error syndrome, one knows which Pauli operators acts on which qubits. Applying the respective Pauli operators on the corresponding locations corrects the error. Suppose that the location of error is known, but the type is not, then this is a \textbf{quantum erasure}\index{quantum erasure}. By the Knill-Laflamme condition\index{Knill-Laflamme condition}, correcting $\ell$ general errors means correcting $2 \ell$ erasures. 

The encoding and syndrome reading circuits can be written using only three quantum gates, namely the \textbf{Hadamard gate}\index{quantum gate!Hadamard}, the \textbf{phase $S$ gate}\index{quantum gate!$S$}, and the \textbf{CNOT gate}\index{quantum gate!CNOT}, whose respective matrices are
\[
{\rm H}=\frac{1}{\sqrt{2}} 
\begin{bmatrix}
	1&1\\1&-1
\end{bmatrix}, \quad 
{\rm S} = \begin{bmatrix}
	1 & 0 \\ 0 & i
\end{bmatrix} , \quad
{\rm CNOT} = \begin{bmatrix}
	1 & 0 & 0 & 0 \\ 0 & 1 & 0 & 0\\ 0&0&0&1 \\ 0&0&1&0
\end{bmatrix}.
\]
A treatment on the circuit implementations is available, \eg, in~\cite{Grassl2011}.

We now look into suitable classical codes that fully describe the set of correctable errors. All constructions are applications of the stabilizer formalism\index{quantum stabilizer formalism}. Since all of the inner products used are nondegenerate, \ie, $(C^{\perp})^{\perp} = C$, one can interchange self-orthogonality and dual-containment, provided that the derived parameters are adjusted accordingly.   

First, we consider a generic construction of $q$-ary quantum codes\index{quantum code} via additive (\ie, $\mathbb{F}_q$-linear) codes over $\mathbb{F}_{q^2}$. Let $\{1,\gamma\}$ be a  basis of $\mathbb{F}_{q^2}$ over $\mathbb{F}_q$. 
The map $\Phi: \overline{\EE_n} \cong \mathbb{F}_q^{2n} \mapsto \mathbb{F}_{q^2}^n$ that sends 
$\overline{\Ee}:= \bv =(\ba | \bb ) = (a_1, \ldots, a_n | b_1,\ldots, b_n)$ to 
$(a_1 + \gamma b_1, \ldots, a_n + \gamma b_n)$ is an isomorphism of $\mathbb{F}_q$-vector spaces. It is also an isometry, since $\wQ(\Ee)=\wH(\Phi((\ba|\bb )))$. For any $\bu:=(u_1,\ldots,u_n)$ and 
$\bv:=(v_1, \ldots,v_n) \in \mathbb{F}_{q^2}^n$, we use $\bu^q$ to denote $(u_1^q,\ldots,u_n^q)$ and define the \textbf{trace alternating inner product}\index{inner product!trace alternating} of $\bu$ and $\bv$ as
$\displaystyle{
	\inner{\bu,\bv}_{\rm alt} := \Tr_{\mathbb{F}_{q^2}/\mathbb{F}_q} \left(\frac{\bu \cdot \bv^q - \bu^q \cdot \bv}{\gamma - \gamma^q }\right)}$. When $q=2$, $\inner{\bu,\bv}_{\rm alt}$ coincides with the \textbf{trace Hermitian inner product}\index{inner product!trace Hermitian} $\inner{\bu,\bv}_{\Tr_{\HH}} := \sum_{j=1}^n \left(u_j v_j^2 + u_j^2 v_j\right)$, since $\gamma - \gamma^2 = 1$. For any $(\ba,\bb)$ and $(\ba',\bb')$ in $\overline{\EE_n}$, we immediately verify that 
$\inner{(\ba,\bb), (\ba',\bb')}_s = \inner{\Phi((\ba,\bb)),\Phi((\ba',\bb')}_{\rm alt}$. Hence, a linear code $C \subseteq \mathbb{F}_q^{2n}$ is symplectic self-orthogonal\index{symplectic!self-orthogonal} if and only if the additive code $\Phi(C)$ is trace alternating self-orthogonal. The \textbf{Hermitian inner product}\index{inner product!Hermitian} of any $\bu,\bv \in \mathbb{F}_{q^2}^n$ is $\inner{\bu,\bv}_{\HH} := \sum_{j=1}^n u_j v_j^q$. If $\Phi(C)$ is $\mathbb{F}_{q^2}$-linear, instead of being strictly additive, then $\Phi(C) \subseteq (\Phi(C))^{\perp_{\rm alt}}$ if and only if $\Phi(C) \subseteq (\Phi(C))^{\perp_{\HH}}$. Thus, Theorem~\ref{thm:qudit_sympl} has the following equivalent statement.
\begin{theorem}\label{thm:additive}
	Let $\mathcal{C} \subseteq \mathbb{F}_{q^2}^n$ be an $\mathbb{F}_q$-additive code such that $\mathcal{C} \subseteq \mathcal{C}^{\perp_{\rm alt}}$, with $\abs{\mathcal{C}}=q^{n-k}$. Then there exists an $\dsb{n,k,d}_q$ quantum code\index{quantum code} $Q$ with 
	\[
	d(Q) = \wH(\mathcal{C}^{\perp_{\rm alt}} \setminus \mathcal{C}) = 
	\min \{\wH(\bv) : \bv \in \mathcal{C}^{\perp_{\rm alt}}  \setminus \mathcal{C} \}.
	\]
	If $\mathcal{C}$ is $\mathbb{F}_{q^2}$-linear, we can conveniently replace the trace alternating inner product\index{inner product!trace alternating} by the Hermitian inner product\index{inner product!Hermitian}, which is easier to compute.
\end{theorem}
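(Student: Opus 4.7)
The plan is to reduce Theorem~\ref{thm:additive} to Theorem~\ref{thm:qudit_sympl} by transporting everything through the isomorphism $\Phi:\mathbb{F}_q^{2n}\to\mathbb{F}_{q^2}^n$ introduced just before the statement. First I would set $C:=\Phi^{-1}(\mathcal{C})\subseteq\mathbb{F}_q^{2n}$. Since $\Phi$ is an $\mathbb{F}_q$-linear isomorphism and $\abs{\mathcal{C}}=q^{n-k}$, the preimage $C$ is an $\mathbb{F}_q$-subspace of dimension $n-k$. Using the compatibility
\[
\inner{(\ba,\bb),(\ba',\bb')}_s = \inner{\Phi((\ba,\bb)),\Phi((\ba',\bb'))}_{\rm alt}
\]
already recorded in the paragraph preceding the theorem, the hypothesis $\mathcal{C}\subseteq\mathcal{C}^{\perp_{\rm alt}}$ transfers verbatim to $C\subseteq C^{\perp_s}$.

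Next I would verify $\Phi(C^{\perp_s})=\mathcal{C}^{\perp_{\rm alt}}$. The inclusion $\Phi(C^{\perp_s})\subseteq\mathcal{C}^{\perp_{\rm alt}}$ is immediate from the identity above. The reverse inclusion follows by a dimension count over $\mathbb{F}_q$: both sides have $\mathbb{F}_q$-dimension $2n-(n-k)=n+k$, the right-hand side because the trace alternating form on $\mathbb{F}_{q^2}^n$ is nondegenerate as an $\mathbb{F}_q$-bilinear form (this, in turn, follows from nondegeneracy of the symplectic form on $\mathbb{F}_q^{2n}$ via the same $\Phi$). Combining this with the isometry property $\wQ(\bv)=\wH(\Phi(\bv))$ gives
\[
\wQ(C^{\perp_s}\setminus C) \;=\; \wH(\mathcal{C}^{\perp_{\rm alt}}\setminus\mathcal{C}) \;=\; d.
\]
Invoking Theorem~\ref{thm:qudit_sympl} on $C$ then yields an $\dsb{n,k,d}_q$ stabilizer code, establishing the main claim.

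For the final assertion, I would prove that if $\mathcal{C}$ is $\mathbb{F}_{q^2}$-linear then $\mathcal{C}^{\perp_{\rm alt}}=\mathcal{C}^{\perp_{\HH}}$. The containment $\mathcal{C}^{\perp_{\HH}}\subseteq\mathcal{C}^{\perp_{\rm alt}}$ is automatic from the definition of $\inner{\cdot,\cdot}_{\rm alt}$ as a trace of an $\mathbb{F}_{q^2}$-expression built from $\bu\cdot\bv^q$. For the converse, if $\bu\in\mathcal{C}^{\perp_{\rm alt}}$ then for every $\bv\in\mathcal{C}$ and every $\alpha\in\mathbb{F}_{q^2}$ the vector $\alpha\bv$ also lies in $\mathcal{C}$, so $\inner{\bu,\alpha\bv}_{\rm alt}=0$; expanding and varying $\alpha$, the nondegeneracy of $\Tr_{\mathbb{F}_{q^2}/\mathbb{F}_q}$ forces $\bu\cdot\bv^q=0$ in $\mathbb{F}_{q^2}$, i.e.\ $\bu\in\mathcal{C}^{\perp_{\HH}}$. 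The only real pieces of work are therefore these two bookkeeping items — the dimension-count identification $\Phi(C^{\perp_s})=\mathcal{C}^{\perp_{\rm alt}}$ and the clean swap of alternating for Hermitian dual in the $\mathbb{F}_{q^2}$-linear case — and neither involves anything beyond standard nondegeneracy of the trace form.
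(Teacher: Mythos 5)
Your proposal is correct and follows exactly the route the paper intends: the paper gives no separate proof of Theorem~\ref{thm:additive}, presenting it as an ``equivalent statement'' of Theorem~\ref{thm:qudit_sympl} via the isometry $\Phi$ and the compatibility of the symplectic and trace alternating inner products stated in the preceding paragraph. Your write-up simply fills in the bookkeeping (the dimension count for $\Phi(C^{\perp_s})=\mathcal{C}^{\perp_{\rm alt}}$ and the equality of the alternating and Hermitian duals in the $\mathbb{F}_{q^2}$-linear case), and both steps are sound.
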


If $C$ is $\mathbb{F}_4$-additive and is even, \ie, $\wH(\bc)$ is even for all $\bc \in C$, then $C$ is trace Hermitian self-orthogonal. If $C$ is trace Hermitian self-orthogonal and $C$ is $\mathbb{F}_4$-linear, then $C$ is an even code. 

The quantum codes\index{quantum code} in Theorem~\ref{thm:additive} are modeled after classical codes with an additive structure, but the error operators\index{quantum error operators|seealso {Pauli matrices}} are in fact multiplicative. An error $\Ee$ may have the same effect as $\Ee {\rm S}$ where ${\rm S} \neq I$ is an element of the stabilizer group\index{group!stabilizer}. A QEC is degenerate or \textbf{impure}\index{quantum code!impure} if the set of correctable errors contains degenerate errors. Studies on impure codes has been rather scarce. The existence of two inequivalent $\dsb{6,1,3}$ impure qubit codes was shown in~\cite[Section IV]{Calderbank1998}. Remarkably, there is no $\dsb{6,1,3}$ pure\index{quantum code!pure} qubit\index{qubit} code. A systematic construction based on \textbf{duadic codes}\index{duadic code} and further discussion on the advantages of degenerate quantum codes\index{quantum code} are supplied in~\cite{Aly2006}.

A very popular construction is based on nested classical codes. We denote the Euclidean dual of $C$ by $C^{\perp_{\rm{E}}}$. 
\begin{theorem}[\textbf{Calderbank-Shor and Steane (CSS) Construction}]\index{quantum code!Calderbank-Shor-Steane (CSS)}\label{thm:CSS} 
	Let $C_{j}$ be an $[n,k_{j},d_{j}]_{q}$-code for $j\in \{1,2\}$ with $C_{1}^{\perp_{\Euc}} \subseteq C_{2}$. Then there is an 
	\[
	\dsb{n,k_{1}+k_{2}-n,\min \{\wH(C_{2} \setminus C_{1}^{\perp_{\Euc}}),\wH(C_{1} \setminus C_{2}^{\perp_{\Euc}}) \}}_{q}\mbox{-code }Q.
	\]
	The code is pure\index{quantum code!pure} whenever 
	$\min \{\wH(C_{2} \setminus C_{1}^{\perp_{\Euc}}),\wH(C_{1} 
	\setminus C_{2}^{\perp_{\Euc}}) \} = \min \{ d_1,d_2\}$.
\end{theorem}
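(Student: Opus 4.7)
The plan is to reduce to Theorem~\ref{thm:qudit_sympl} by building an explicit symplectic self-orthogonal code in $\mathbb{F}_q^{2n}$ out of $C_1$ and $C_2$. The natural candidate is the product code
\[
C := C_1^{\perp_{\Euc}} \times C_2^{\perp_{\Euc}} = \{ (\ba | \bb) : \ba \in C_1^{\perp_{\Euc}},\, \bb \in C_2^{\perp_{\Euc}}\} \subseteq \mathbb{F}_q^{2n},
\]
of $\mathbb{F}_q$-dimension $(n-k_1)+(n-k_2) = 2n - (k_1+k_2)$. To check symplectic self-orthogonality, take any two members $(\ba|\bb), (\ba'|\bb') \in C$. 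By the hypothesis $C_1^{\perp_{\Euc}} \subseteq C_2$, the vector $\ba'$ lies in $C_2$; since $\bb \in C_2^{\perp_{\Euc}}$, we get $\bb\cdot\ba'=0$ in $\mathbb{F}_q$. The symmetric argument using $C_2^{\perp_{\Euc}} \subseteq (C_1^{\perp_{\Euc}})^{\perp_{\Euc}} = C_1$ gives $\bb'\cdot\ba = 0$, so $\inner{(\ba|\bb),(\ba'|\bb')}_s = \Tr(0-0) = 0$.

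Next I would identify $C^{\perp_s}$ explicitly as $C_2 \times C_1$. For $(\bu|\bv)\in C^{\perp_s}$ the condition $\Tr(\bv\cdot\ba - \bb\cdot\bu)=0$ must hold for every $\ba \in C_1^{\perp_{\Euc}}$ and every $\bb\in C_2^{\perp_{\Euc}}$. Specializing to $\bb=\0$ forces $\Tr(\bv\cdot\ba)=0$ for all $\ba\in C_1^{\perp_{\Euc}}$; because $C_1^{\perp_{\Euc}}$ is $\mathbb{F}_q$-linear and the trace form is non-degenerate, this upgrades to $\bv\cdot\ba = 0$ for all such $\ba$, i.e.\ $\bv\in(C_1^{\perp_{\Euc}})^{\perp_{\Euc}}=C_1$. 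Symmetrically, $\ba=\0$ gives $\bu\in C_2$. A dimension count ($k_1+k_2 + (2n - k_1 - k_2) = 2n$) confirms the identification. Applying Theorem~\ref{thm:qudit_sympl} then yields an $\dsb{n,\, k_1+k_2-n,\, d}_q$ stabilizer code with $d = \wQ(C^{\perp_s}\setminus C)$.

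Now I would pin down $d$. For $(\bu|\bv)\in C^{\perp_s}\setminus C = (C_2\times C_1)\setminus(C_1^{\perp_{\Euc}}\times C_2^{\perp_{\Euc}})$, at least one of $\bu\notin C_1^{\perp_{\Euc}}$ or $\bv\notin C_2^{\perp_{\Euc}}$ holds. Using the trivial bound $\wQ(\bu|\bv) \geq \max\{\wH(\bu),\wH(\bv)\}$, each such vector has quantum weight at least $\min\{\wH(C_2\setminus C_1^{\perp_{\Euc}}), \wH(C_1\setminus C_2^{\perp_{\Euc}})\}$. Conversely, the vectors $(\bu|\0)$ with $\bu$ of minimum weight in $C_2\setminus C_1^{\perp_{\Euc}}$, and $(\0|\bv)$ with $\bv$ of minimum weight in $C_1\setminus C_2^{\perp_{\Euc}}$, lie in $C^{\perp_s}\setminus C$ and achieve this bound, giving the claimed $d$.

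For the purity statement, recall that $Q$ is pure precisely when $\wQ(C^{\perp_s}\setminus C) = \wQ(C^{\perp_s}\setminus\{\0\})$. Because $C^{\perp_s}=C_2\times C_1$, the minimum nonzero quantum weight in $C^{\perp_s}$ is realized by $(\bu|\0)$ or $(\0|\bv)$ with $\bu,\bv$ of minimum Hamming weight in $C_2, C_1$ respectively, and equals $\min\{d_1,d_2\}$. Thus purity is equivalent to $d = \min\{d_1,d_2\}$, as stated. The only subtle point to be careful about is the step that lifts $\Tr(\bv\cdot\ba)=0$ to $\bv\cdot\ba=0$ in $\mathbb{F}_q$; everything else is linear-algebra bookkeeping, so I anticipate that step to be the main obstacle worth spelling out.
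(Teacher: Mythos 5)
Your construction is exactly the paper's: the paper takes $C$ with generator matrix $\bigl[\begin{smallmatrix} H_1 & \0 \\ \0 & H_2\end{smallmatrix}\bigr]$, i.e.\ $C = C_1^{\perp_{\Euc}}\times C_2^{\perp_{\Euc}}$ with $C^{\perp_s}=C_2\times C_1$, and then invokes Theorem~\ref{thm:qudit_sympl}, just as you do. Your write-up is correct and in fact supplies the details the paper waves off as ``clear'' (the weight computation via $\wQ(\bu|\bv)\geq\max\{\wH(\bu),\wH(\bv)\}$, the trace non-degeneracy step, and the purity criterion), so there is nothing further to fix.
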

\begin{proof}
	Let $G_j$ and $H_j$ be the generator and parity-check matrices of $C_j$. Consider the linear code $C \subseteq \mathbb{F}_q^{2n}$ with generator matrix 
	$\displaystyle{
		\begin{bmatrix}
			H_1 & \0 \\
			\0 & H_2
		\end{bmatrix}
	}$. Since $C_1^{\perp_{\Euc}} \subseteq C_2$, we have $H_1 H_2^{\top} = \0$. Similarly, from $C_2^{\perp_{\Euc}} \subseteq C_1$, we know $H_2 H_1^{\top} = \0$. Define $C^{\perp_s}$ to be the code with parity check and generator matrices, respectively,  
	$\displaystyle{
		\begin{bmatrix}
			H_2 & \0 \\
			\0 & H_1
	\end{bmatrix}}$ and 
	$\displaystyle{
		\begin{bmatrix}
			G_2 & \0 \\
			\0 & G_1
	\end{bmatrix}}$. We verify that $C \subseteq C^{\perp_s}$, with $\dim_{\mathbb{F}_q} C = 2n-(k_1 + k_2)$. By Theorem~\ref{thm:qudit_sympl}, $\dim_{\mathbb{C}} Q= n - \dim_{\mathbb{F}_q} C= k_1 + k_2 -n$. The distance computation is clear.
\end{proof}

A special case of the \textbf{CSS construction}\index{quantum code!Calderbank-Shor-Steane (CSS)} comes via a Euclidean dual-containing code $C^{\perp_{\rm{E}}} \subseteq C$. From such an $[n,k,d]_q$-code $C$, one obtains an 
$\dsb{n, 2k-n,\geq d}_q$-code $Q$. The next method allows for most qubit CSS codes to be enlarged while avoiding a significant drop in the distance. The choice of the extra vectors in the generator matrix of $C'$ is detailed in~\cite[Section III]{Steane1999}. 
\begin{theorem}[\textbf{Steane Enlargement of CSS Codes}]\index{quantum code!Steane enlargement of CSS}
	Let $C$ be an $[n,k,d]_2$-code that contains its Euclidean dual $C^{\perp_{\Euc}} \subseteq C$. Suppose that $C$ can be enlarged to $C'=[n,k' > k+1 ,d']_2$. Then there exists a pure\index{quantum code!pure} qubit code of parameters $\dsb{n,k+k'-n, \min\{d, \lceil{3d'/2 \rceil}\}}$.
\end{theorem}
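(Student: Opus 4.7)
The plan is to apply the symplectic-code construction of Theorem~\ref{thm:symplectic} to a classical code $C_s \subseteq \mathbb{F}_2^{2n}$ that augments the CSS-type stabilizer for $C'$ with a small number of ``mixed'' generators built from the vectors enlarging $C$ to $C'$. The extra structure in the chain $(C')^{\perp_{\Euc}} \subseteq C^{\perp_{\Euc}} \subseteq C \subseteq C'$ is precisely what lifts the naive CSS distance bound $\min\{d,d'\}$ up to $\min\{d,\lceil 3d'/2\rceil\}$.

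First I fix a complement $A$ of $C$ inside $C'$, so that $C' = C \oplus A$ with $\dim_{\mathbb{F}_2} A = k'-k \geq 2$; the basis $a_1,\ldots,a_{k'-k}$ of $A$ is chosen according to the recipe of~\cite[Section III]{Steane1999} to guarantee the distance bound. Because $(C')^{\perp_{\Euc}} \subseteq C \subseteq C'$, the dual $(C')^{\perp_{\Euc}}$ is Euclidean self-orthogonal. I set
\[
C_s := \operatorname{span}_{\mathbb{F}_2}\bigl\{(h|\0),(\0|h) : h\in(C')^{\perp_{\Euc}}\bigr\} + \operatorname{span}_{\mathbb{F}_2}\bigl\{(a_i|a_i) : 1\leq i\leq k'-k\bigr\}.
\]
Symplectic self-orthogonality follows from three checks: the pair $(h|\0)$ and $(\0|h')$ contributes $h\cdot h'$, which vanishes because $(C')^{\perp_{\Euc}}$ is self-orthogonal; two mixed generators satisfy $\inner{(a_i|a_i),(a_j|a_j)}_s = 2 a_i\cdot a_j \equiv 0 \pmod 2$; and $\inner{(a_i|a_i),(h|\0)}_s = h\cdot a_i = 0$ since $a_i\in C' = ((C')^{\perp_{\Euc}})^{\perp_{\Euc}}$. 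Counting, $\dim_{\mathbb{F}_2} C_s = 2(n-k')+(k'-k) = 2n-k-k'$, so Theorem~\ref{thm:symplectic} delivers an $\dsb{n,\,k+k'-n,\,d_Q}$ qubit stabilizer code with $d_Q = \wQ(C_s^{\perp_s}\setminus C_s)$.

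The distance bound is the heart of the proof. A routine dualization gives $C_s^{\perp_s} = \{(x|y) : x,y\in C',\ x+y\in A^{\perp_{\Euc}}\}$, and the pointwise support identity
\[
\wQ(x|y) = \tfrac{1}{2}\bigl(\wH(x)+\wH(y)+\wH(x+y)\bigr)
\]
reduces the task to simultaneously bounding three classical Hamming weights. I case-split on the coset of $(x|y)$ in $C_s^{\perp_s}/C_s$. In any coset admitting a representative in which $x$ or $y$ vanishes, the surviving coordinate can be shifted (using both the dual-containment $C^{\perp_{\Euc}}\subseteq C$ and the choice of $A$) to lie in $C$, producing the $d$ lower bound on $\wQ$. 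In any coset for which every representative has $x$, $y$, and $x+y$ all nonzero, each of those three vectors lies in $C'$ and therefore has Hamming weight at least $d'$; the identity above then yields $\wQ \geq 3d'/2$, and by integrality $\wQ \geq \lceil 3d'/2\rceil$.

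The principal obstacle will be executing this case split carefully enough that no coset slips into a ``naive CSS'' configuration that forces only the weaker $d'$ bound. This is what the selection of $a_1,\ldots,a_{k'-k}$ in~\cite{Steane1999} is designed to prevent: the mixed $(a_i|a_i)$ generators enlarge $C_s$ with exactly enough symplectic-dual vectors to reabsorb the troublesome single-sided low-weight representatives into the stabilizer itself. Purity then follows from the same weight inequalities, which show that every nonzero element of $C_s$ already has quantum weight at least $\min\{d,\lceil 3d'/2\rceil\}$, so no low-weight operator hides inside the stabilizer.
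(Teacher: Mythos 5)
The paper states this theorem without proof, deferring to~\cite[Section III]{Steane1999}, so your argument must stand on its own. Its skeleton is the right one --- double the parity checks of $C'$ into $(h|\0)$ and $(\0|h)$, adjoin mixed generators built from a complement $A$ of $C$ in $C'$, and control quantum weight via $\wQ(x|y)=\tfrac12\left(\wH(x)+\wH(y)+\wH(x+y)\right)$ --- but your specific choice of mixed generators $(a_i|a_i)$ is exactly the choice that fails. Steane takes $(a_i\,|\,(Ma)_i)$ where $M$ is an invertible $(k'-k)\times(k'-k)$ matrix over $\mathbb{F}_2$ with no eigenvalue $1$ (equivalently, $I+M$ invertible); such an $M$ exists precisely when $k'-k\geq 2$, which is where the hypothesis $k'>k+1$ enters. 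Your construction never uses that hypothesis --- a warning sign, since the statement is false in general for $k'=k+1$.

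Concretely, with your generators the symplectic dual is $C_s^{\perp_s}=\{(x|y): x,y\in C',\ x+y\perp A\}$, and your case split omits the cosets with $x=y\neq \0$: there neither $x$ nor $y$ vanishes, yet $x+y=\0$, so the ``all three nonzero'' branch does not apply. Every $(x|x)$ with $x\in C'$ lies in $C_s^{\perp_s}$ (the condition $x+x=\0\perp A$ is vacuous), whereas $C_s$ contains only those with $x\in (C')^{\perp_{\Euc}}+A$, a subspace of dimension $n-k<k'$ whenever the quantum dimension is positive. If some minimum-weight word $x$ of $C'$ lies outside that subspace --- and nothing in your argument prevents this --- then $(x|x)\in C_s^{\perp_s}\setminus C_s$ has quantum weight exactly $d'$, destroying the $\lceil 3d'/2\rceil$ bound; the generators $(a_i|a_i)$ themselves may have weight as low as $d'$, which also breaks purity. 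The fixed-point-free $M$ is what eliminates this case: writing elements of $C_s^{\perp_s}$ as $(c_1+uA\,|\,c_2+uMA)$ with $c_1,c_2\in C$ and $u\in\mathbb{F}_2^{k'-k}$, either $u=\0$, so $x,y\in C$ and $\wQ\geq d$, or $u\neq\0$, in which case $x$, $y$, $x+y$ have $A$-components $u$, $uM$, $u(I+M)$, all nonzero by invertibility of $M$ and of $I+M$, so all three lie in $C'\setminus\{\0\}$ and $\wQ\geq\lceil 3d'/2\rceil$. Fix the mixed generators and make this trichotomy explicit, and your outline becomes Steane's proof.
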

A generalization to the qudit\index{qudit} case was subsequently given in~\cite{Ling2010}, where the distance is $\min\{d, \lceil{\frac{q+1}{q} \,d' \rceil}\}$. Comparing the minimum distances in the resulting codes, the enlargement offers a better chance of relative gain in the qubit\index{qubit} case as compared with the $q>2$ cases.

Lison{\v{e}}k and Singh, inspired by the classical \textbf{Construction X}, proposed a modification to qubit\index{qubit} stabilizer codes\index{quantum code!stabilizer} in~\cite{Lisonek2014}. The construction generalizes naturally to qudit\index{qudit} codes.

\begin{theorem}[\textbf{Quantum Construction X}]\index{quantum code!Construction X of}\label{thm:X}
	For an $[n,k]_{q^2}$-linear code $C$, let $e:=k - \dim(C \cap C^{\perp_{\HH}})$. Then there exists an $\dsb{n+e,n-2k+e,d}_q$-code $Q$, with $d:=d(Q) \geq \min \{d(C^{\perp_{\HH}}), d(C + C^{\perp_{\HH}})+1\}$, where 
	$C + C^{\perp_{\HH}}:=\{\bu + \bv : \bu \in C, \bv \in C^{\perp_{\HH}}\}$. 
\end{theorem}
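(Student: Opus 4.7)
The plan is to reduce to the Hermitian case of Theorem~\ref{thm:additive} by extending $C$ with $e$ extra coordinates into an $\mathbb{F}_{q^2}$-linear Hermitian self-orthogonal code $\widetilde{C}$ of length $n+e$ and the same dimension $k$. I would begin by isolating $D := C \cap C^{\perp_{\HH}}$ of dimension $k-e$, picking a basis $b_1,\ldots,b_{k-e}$ of $D$, and completing it to a basis of $C$ by $c_1,\ldots,c_e$. Since $D$ is the radical of $\inner{\cdot,\cdot}_{\HH}$ restricted to $C$, the Hermitian Gram matrix $M = (\inner{c_i,c_j}_{\HH})$ is invertible. The crux is then to produce $\bu_1,\ldots,\bu_e \in \mathbb{F}_{q^2}^e$ whose Hermitian Gram matrix is $-M$; appending $\bu_j$ after $c_j$ and $\0$ after each $b_i$ yields the desired $\widetilde{C}$, and Theorem~\ref{thm:additive} packages this into an $\dsb{n+e,\,n-2k+e,\,d}_q$ code $Q$.

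For the distance, I would first establish the structural identity $\widetilde{C}^{\perp_{\HH}} = \widetilde{C} + \{(\bv \mid \0) : \bv \in C^{\perp_{\HH}}\}$, combining the duality $D^{\perp_{\HH}} = C + C^{\perp_{\HH}}$ with a direct containment check and a matching dimension count. Any element of $\widetilde{C}^{\perp_{\HH}} \setminus \widetilde{C}$ then admits a decomposition $(\bv_0 \mid \0) + (\bv_1 \mid \bw_1)$ with $\bv_0 \in C^{\perp_{\HH}} \setminus D$ and $(\bv_1 \mid \bw_1) \in \widetilde{C}$, and I would split into two cases. If $\bw_1 = \0$, linear independence of the $\bu_j$ forces $\bv_1 \in D$, so the whole vector lies in $C^{\perp_{\HH}} \setminus D$ and has Hamming weight at least $d(C^{\perp_{\HH}})$. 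If $\bw_1 \neq \0$, it contributes at least $1$ to the weight, while $\bv_0 + \bv_1 \in C + C^{\perp_{\HH}}$ is forced to be nonzero (otherwise $\bv_0 = -\bv_1 \in D$ pushes the whole vector back into $\widetilde{C}$), contributing at least $d(C + C^{\perp_{\HH}})$. Taking the minimum yields the claimed bound.

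The step I expect to be the main obstacle is the construction of the $\bu_j$. This amounts to realizing any invertible Hermitian matrix over $\mathbb{F}_{q^2}$ as the Gram matrix of a basis, which is the Hermitian analogue of Sylvester's law of inertia and follows from the classification showing every nondegenerate Hermitian form on $\mathbb{F}_{q^2}^e$ is isometric to the standard one. Granting this classification, the remaining arguments are essentially formal bookkeeping; the one delicate point is that orthogonality between the $(b_i \mid \0)$ block and the $(c_j \mid \bu_j)$ block is automatic thanks to $D \subseteq C^{\perp_{\HH}}$, so only the $(c_j \mid \bu_j)$ block imposes a nontrivial Gram-matrix condition.
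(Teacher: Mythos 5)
The paper states this theorem without proof, deferring to Lison\v{e}k--Singh \cite{Lisonek2014}, so there is nothing internal to compare against; judged on its own, your argument is complete and correct, and it is essentially the original construction. The extension $\widetilde{C}$ spanned by $(b_i\mid \0)$ and $(c_j\mid \bu_j)$ is Hermitian self-orthogonal exactly as you say, the identity $\widetilde{C}^{\perp_{\HH}}=\widetilde{C}+\{(\bv\mid\0):\bv\in C^{\perp_{\HH}}\}$ follows from your containment-plus-dimension count (the intersection of the two summands is $\{(\bv\mid\0):\bv\in D\}$, of dimension $k-e$), and the two-case weight estimate is airtight once one notes, as you do, that any decomposition of an element of $\widetilde{C}^{\perp_{\HH}}\setminus\widetilde{C}$ must have $\bv_0\notin D$ and that $\bv_0+\bv_1=\0$ would force $\bv_0\in D$. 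The one external input you lean on --- that the invertible Hermitian matrix $-M$ over $\mathbb{F}_{q^2}$ is the Gram matrix of some $e$-tuple $\bu_1,\ldots,\bu_e$ --- is genuinely needed and is exactly the classification of nondegenerate Hermitian forms over finite fields (unique up to isometry, because the norm map $\mathbb{F}_{q^2}^{*}\to\mathbb{F}_q^{*}$ is surjective); citing or sketching that surjectivity argument would make the proof self-contained. Two cosmetic remarks: $M$ is Hermitian since $M_{ji}=M_{ij}^{q}$, which is worth one line so that ``$-M$ is an invertible Hermitian matrix'' is justified; and the passage from $\widetilde{C}$ to the quantum code should explicitly invoke the $\mathbb{F}_{q^2}$-linear (Hermitian) case of Theorem~\ref{thm:additive} with $\abs{\widetilde{C}}=q^{2k}$, giving dimension $(n+e)-2k$ as claimed.
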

The case $e=0$ is the usual stabilizer construction. To prevent a sharp drop in $d$, we want small $e$, \ie, large \textbf{Hermitian hull}\index{Hermitian hull} $C \cap C^{\perp_{\HH}}$. 

We shift our attention now to \textbf{propagation rules}\index{quantum propagation rules} and \textbf{bounds}\index{quantum bound}. Most of them are direct consequences of the propagation rules and bounds on the classical codes used as ingredients in the above constructions.

\begin{proposition}[\textbf{{see \cite[Theorem 6]{Calderbank1998} for the binary case}}] \label{propagation}
	From an $\dsb{n,k,d}_q$-code, the following codes can be derived: an $\dsb{n,k-1, \geq d}_q$-code by \textbf{subcode construction}\index{quantum!code!subcode construction of}, an $\dsb{n+1,k, \geq d}_q$-code by \textbf{lengthening}\index{quantum!code!lengthening of}, and an $\dsb{n-1,k, \geq d-1}_q$-code by \textbf{puncturing}\index{quantum!code!puncturing of}.
\end{proposition}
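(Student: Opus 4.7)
The plan is to prove each of the three propagation rules by manipulating the symplectic self-orthogonal code $C \subseteq \mathbb{F}_q^{2n}$ of $\mathbb{F}_q$-dimension $n-k$ associated with the input code, and then invoking Theorem~\ref{thm:qudit_sympl}. Two preliminary observations simplify matters. First, the trace symplectic form is alternating, so every vector is automatically isotropic. Second, because $\Tr_{\mathbb{F}_q/\mathbb{F}_p}$ is nondegenerate on $\mathbb{F}_q$, the trace symplectic dual of an $\mathbb{F}_q$-linear subspace coincides with its dual under the $\mathbb{F}_q$-bilinear symplectic form, making the dimension arithmetic in $\mathbb{F}_q^{2(n \pm 1)}$ routine.

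For the \textbf{subcode construction}, $k \geq 1$ gives $\dim_{\mathbb{F}_q} C^{\perp_s} = n+k > n-k = \dim_{\mathbb{F}_q} C$, so one picks $\bv \in C^{\perp_s} \setminus C$ and sets $C' := C + \mathbb{F}_q \bv$. Isotropy of $\bv$ and $\bv \in C^{\perp_s}$ make $C'$ self-orthogonal of dimension $n-k+1$, and $C \subseteq C'$ forces $(C')^{\perp_s} \setminus C' \subseteq C^{\perp_s} \setminus C$, so the minimum quantum weight cannot drop. For \textbf{lengthening}, pad each vector of $C$ with zeros at positions $n+1$ and $2(n+1)$ to obtain $\widetilde{C} \subseteq \mathbb{F}_q^{2(n+1)}$ and adjoin the vector $\mathbf{g}$ whose sole nonzero entry is a $1$ in the $a$-half at position $n+1$. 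Then $\mathbf{g}$ is isotropic and orthogonal to $\widetilde{C}$, so $C' := \widetilde{C} + \mathbb{F}_q \mathbf{g}$ is self-orthogonal of dimension $(n+1)-k$. Any $\bv = (\bu, \alpha \mid \bw, \beta) \in (C')^{\perp_s} \setminus C'$ forces $\beta = 0$ (orthogonality with $\mathbb{F}_q \mathbf{g}$) and $(\bu \mid \bw) \in C^{\perp_s}$ (orthogonality with $\widetilde{C}$); if $(\bu \mid \bw)$ lay in $C$, then $\bv = \widetilde{(\bu \mid \bw)} + \alpha \mathbf{g} \in C'$, contradicting $\bv \notin C'$, so $(\bu \mid \bw) \in C^{\perp_s} \setminus C$ and $\wQ(\bv) \geq \wQ(\bu \mid \bw) \geq d$.

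\textbf{Puncturing} is where the principal obstacle lies, since deleting a coordinate interacts nontrivially with self-orthogonality. My plan is to choose a position $i$ and project via $\pi: \mathbb{F}_q^{2n} \to \mathbb{F}_q^{2(n-1)}$ dropping $a_i$ and $b_i$. Set $C_0 := \{\bc \in C : a_i(\bc) = b_i(\bc) = 0\}$, whose codimension in $C$ equals the rank $r \in \{0,1,2\}$ of the evaluation map $C \to \mathbb{F}_q^2$ at position $i$. In the generic rank-$1$ case, $\pi|_{C_0}$ is injective and $C' := \pi(C_0) \subseteq \mathbb{F}_q^{2(n-1)}$ is self-orthogonal of dimension $n-k-1$, the stabilizer size needed for an $\dsb{n-1, k, ?}_q$-code. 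A dimension count using that $\ker \pi \cap C^{\perp_s}$ is the $1$-dimensional symplectic orthogonal in $\mathbb{F}_q^2$ of the image of $C$ gives $(C')^{\perp_s} = \pi(C^{\perp_s})$; any $\bv' \in (C')^{\perp_s} \setminus C'$ then lifts to some $\bv \in C^{\perp_s} \setminus C$ (translating the lift by $\ker \pi \cap C^{\perp_s}$ if needed), so $\wQ(\bv') \geq \wQ(\bv) - 1 \geq d - 1$. The rank-$2$ case is handled by the variant $C' := \pi(C_0 + \mathbb{F}_q \bc_1)$ for a suitable $\bc_1 \in C$ with $a_i(\bc_1) \neq 0$ and $b_i(\bc_1) = 0$, and a position of rank $\geq 1$ always exists whenever the code is nontrivial; verifying the dual identification and executing these case adjustments are the chief technicalities.
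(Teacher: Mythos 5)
The paper does not actually prove Proposition~\ref{propagation}; it only cites \cite[Theorem 6]{Calderbank1998} for the binary case, so your argument has to stand on its own, and in substance it does. The subcode and lengthening arguments are correct and essentially complete: isotropy of every vector under the alternating form makes $C+\mathbb{F}_q\bv$ self-orthogonal, and in the lengthening step the forced vanishing $\beta=0$ indeed follows from $\Tr(c\beta)=0$ for all $c\in\mathbb{F}_q$ together with nondegeneracy of the trace, which your appeal to orthogonality with the whole line $\mathbb{F}_q\mathbf{g}$ supplies. (One convention to flag: after the subcode step with $k=1$ the set $(C')^{\perp_s}\setminus C'$ is empty, so the $k=0$ case needs the usual separate convention for $\dsb{n,0,d}$ codes.) For puncturing, the rank-$1$ case goes through as you describe: $\ker\pi\cap C^{\perp_s}$ is the one-dimensional perp of the image line, the dimension count gives $(C')^{\perp_s}=\pi(C^{\perp_s})$, and the translation-of-the-lift step closes correctly because if every preimage of $\bv'$ in $C^{\perp_s}$ lay in $C$ one would get $C=C_0\oplus(\ker\pi\cap C)$ and hence $\bv'\in\pi(C)=C'$, a contradiction.

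The one place your sketch has a real (but patchable) soft spot is the rank-$2$ case. The map $\pi$ need not be injective on $C_0+\mathbb{F}_q\bc_1$: if the degenerate code $C$ contains a quantum-weight-one vector supported at position $i$ lying on the line you chose, the dimension of $C'$ drops and you land at $\dsb{n-1,k+1}$ instead of $\dsb{n-1,k}$. This does not contradict $d\geq 2$, since $d$ only constrains $C^{\perp_s}\setminus C$. The fix is to choose the isotropic line $M\subset\mathbb{F}_q^2$ defining $D:=\mathrm{ev}_i^{-1}(M)$ so that $M$ avoids $\mathrm{ev}_i(C\cap\ker\pi)$; this is always possible because, by self-orthogonality and $\mathbb{F}_q$-linearity, the weight-one vectors of $C$ at position $i$ span at most one isotropic line, while $\mathbb{F}_q^2$ contains $q+1$ of them (every line is isotropic for the form $ba'-b'a$). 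With that adjustment, and the observation that a rank-$\geq 1$ position exists whenever $C\neq 0$, your proof is complete; it is, as far as one can tell, the same flavour of argument as the cited classical source, carried out directly in the symplectic picture of Theorem~\ref{thm:qudit_sympl}.
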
 
The analogue of \textbf{shortening}\index{quantum!code!shortening of} is less straightforward. It requires the construction of an auxiliary code and, then, a check on whether this code has codewords of a given length. The details on how to shorten quantum codes\index{quantum code} are available in~\cite[Section 4]{GBR04}, building upon the initial idea of Rains in~\cite{Rains99}.

How can we measure the goodness of a QEC? For stabilizer codes\index{quantum code!stabilizer}, given their classical ingredients and constructions, there are numerous bounds\index{quantum bound}.

\begin{theorem}[\textbf{Quantum Hamming Bound\index{quantum bound!Hamming}, see {\cite{Calderbank1998}} for the binary case}] 
	Let $Q$ be a pure\index{quantum code!pure} $\dsb{n,k,d}_q$-code with $d \geq 2 \ell + 1 $ and $k>0$. Then
	\begin{equation}
		q^{n-k} \geq \sum_{j=0}^{\ell} (q^2-1)^j \binom{n}{j}.
	\end{equation}
	$Q$ is \textbf{perfect}\index{quantum code!perfect} if it meets the bound.
\end{theorem}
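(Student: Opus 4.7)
The plan is to prove the Hamming bound by a sphere-packing argument in $\mathbb{C}^{q^n}$. The key observation is that purity together with $d \geq 2\ell+1$ (so that $\ell \leq \lfloor (d-1)/2 \rfloor$) guarantees that the family of subspaces $\{\overline{\Ee}\, Q : \overline{\Ee} \in \overline{\EE_n}(\ell)\}$ is pairwise Hermitian-orthogonal. I would first establish that each of these subspaces is well-defined (depends only on the class modulo the center, which acts by scalars), that each has the same complex dimension as $Q$, and that distinct classes give distinct subspaces; then compare the total dimension to $q^n$.

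First I would note that every $\Ee \in \EE_n$ is unitary, hence $\Ee : Q \to \Ee\, Q$ is a $\mathbb{C}$-linear isomorphism and $\dim_{\mathbb{C}} \Ee\, Q = \dim_{\mathbb{C}} Q = q^k$. Since $\mathcal{C}(\EE_n)$ acts by unit scalars, the subspace $\Ee\, Q$ depends only on $\overline{\Ee} \in \overline{\EE_n}$. For distinctness, assume $\overline{\Ee_1}\,Q = \overline{\Ee_2}\,Q$ with $\overline{\Ee_1} \neq \overline{\Ee_2}$, both in $\overline{\EE_n}(\ell)$; by purity this common subspace is orthogonal to itself, forcing dimension $0$, contradicting $q^k \geq q > 0$ from $k > 0$.

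Next I would count $|\overline{\EE_n}(\ell)|$. Each operator class of weight exactly $j$ is specified by choosing the $j$ nontrivial coordinate positions, in $\binom{n}{j}$ ways, then at each chosen position selecting one of the $q^2 - 1$ nontrivial representatives (corresponding to the $q^2$-element quotient per tensor factor of $\overline{\EE_n} \cong \mathbb{F}_q^{2n}$ minus the identity). Summing over $j \leq \ell$ yields
\[
|\overline{\EE_n}(\ell)| = \sum_{j=0}^{\ell} (q^2 - 1)^j \binom{n}{j}.
\]

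Finally, the pairwise orthogonal family $\{\overline{\Ee}\, Q\}_{\overline{\Ee} \in \overline{\EE_n}(\ell)}$ sits inside $\mathbb{C}^{q^n}$, so the sum of their dimensions is at most $q^n$:
\[
q^k \sum_{j=0}^{\ell} (q^2 - 1)^j \binom{n}{j} \leq q^n,
\]
and dividing by $q^k$ gives the claimed inequality; equality (the \emph{perfect} case) means that the $\overline{\Ee}\, Q$ exactly tile $\mathbb{C}^{q^n}$. The only subtle point is invoking purity with the right weight parameter, namely ensuring $\ell \leq \lfloor (d-1)/2 \rfloor$, which is exactly the hypothesis $d \geq 2\ell + 1$; the rest is counting and linear algebra. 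There is no serious obstacle beyond being careful about this indexing and about the distinctness argument, which genuinely needs both purity \emph{and} $k>0$.
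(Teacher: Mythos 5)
Your proposal is correct and follows exactly the paper's own sphere-packing argument: purity makes the subspaces $\overline{\Ee}\,Q$ for $\overline{\Ee}\in\overline{\EE_n}(\ell)$ pairwise orthogonal, each has dimension $q^k$ since the error operators are unitary, and $\abs{\overline{\EE_n}(\ell)}=\sum_{j=0}^{\ell}(q^2-1)^j\binom{n}{j}$, so the total dimension is bounded by $q^n$. The extra care you take with well-definedness modulo the center and with distinctness of the subspaces (using $k>0$) is a sound elaboration of details the paper leaves implicit, not a different route.
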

The proof comes from the observation that   
\[
q^n \geq \sum_{\overline{\Ee} \in \overline{\EE_n}(\ell)} 
\dim_{\mathbb{C}} (\overline{\Ee} \, Q)=
\dim_{\mathbb{C}} Q \cdot \abs{\overline{\EE_n}(\ell)}
=q^k \sum_{j=0}^{\ell} (q^2-1)^j \binom{n}{j}.
\]
The code in Example~\ref{ex:512} is perfect\index{quantum code!perfect}. It has $2^{n-k}=16 = \sum_{j=0}^{1} 3^j \binom{5}{j}=1+15$.

Here is another bound which had been established as a necessary condition for pure\index{quantum code!pure} stabilizer codes\index{quantum code!stabilizer}.
\begin{theorem}[\textbf{Quantum Gilbert-Varshamov Bound\index{quantum bound!Gilbert-Varshamov} {\cite{Feng2004}}}]\label{thm:GV}
	Let $n > k \geq 2$, $d \geq 2$, and $n \equiv k \pmod{2}$. A pure\index{quantum code!pure} $\dsb{n,k,d}_q$-code exists if 
	\[
	\frac{q^{n-k+2}-1}{q^2-1} > \sum_{j=1}^{d-1} (q^2-1)^{j-1} \binom{n}{j}.
	\]
\end{theorem}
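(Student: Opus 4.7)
The plan is to reduce to the $\mathbb{F}_{q^2}$-linear Hermitian reformulation of Theorem~\ref{thm:additive} and establish existence of the required code by a greedy construction in the projective Hermitian geometry over $\mathbb{F}_{q^2}$. By Theorem~\ref{thm:additive}, a pure $\dsb{n,k,d}_q$ code exists whenever one can produce an $\mathbb{F}_{q^2}$-linear Hermitian self-orthogonal $\mathcal{C}\subseteq \mathbb{F}_{q^2}^n$ with $\dim_{\mathbb{F}_{q^2}}\mathcal{C}=(n-k)/2$ for which every nonzero vector of $\mathcal{C}^{\perp_{\HH}}$ has Hamming weight at least $d$ (purity requires the distance condition on all of $\mathcal{C}^{\perp_{\HH}}$, not merely on $\mathcal{C}^{\perp_{\HH}}\setminus\mathcal{C}$). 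The parity hypothesis $n\equiv k\pmod 2$ is exactly what makes $(n-k)/2$ an integer and so permits this $\mathbb{F}_{q^2}$-linear, rather than merely $\mathbb{F}_q$-additive, framing.

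The shape of the stated inequality makes the right objects to count very transparent. Because $\wH(\lambda v)=\wH(v)$ for every $\lambda\in\mathbb{F}_{q^2}^{\times}$, short vectors are partitioned into $\mathbb{F}_{q^2}$-\emph{lines}: the right-hand side $\sum_{j=1}^{d-1}(q^2-1)^{j-1}\binom{n}{j}$ counts exactly the ``short lines'' (1-dimensional $\mathbb{F}_{q^2}$-subspaces of $\mathbb{F}_{q^2}^n$ whose nonzero vectors have weight $\leq d-1$), while the left-hand side
\[
\frac{q^{n-k+2}-1}{q^2-1}=1+q^2+\cdots+q^{n-k}
\]
is the number of points of $PG((n-k)/2,q^2)$. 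I would build $\mathcal{C}$ inductively, $\mathcal{C}_0=\{0\}\subset\mathcal{C}_1\subset\cdots\subset\mathcal{C}_{(n-k)/2}=\mathcal{C}$, with $\mathcal{C}_i=\mathcal{C}_{i-1}+\mathbb{F}_{q^2} v_i$ for some isotropic $v_i\in\mathcal{C}_{i-1}^{\perp_{\HH}}\setminus\mathcal{C}_{i-1}$. A short line $\ell\subset\mathcal{C}_{i-1}^{\perp_{\HH}}\setminus\mathcal{C}_{i-1}$ survives into $\mathcal{C}_i^{\perp_{\HH}}=\mathcal{C}_{i-1}^{\perp_{\HH}}\cap v_i^{\perp_{\HH}}$ if and only if $v_i\in\ell^{\perp_{\HH}}$, a proper Hermitian hyperplane condition on $v_i$ inside $\mathcal{C}_{i-1}^{\perp_{\HH}}$. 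A union bound over the short lines the step must kill, compared against the count of available isotropic extensions outside $\mathcal{C}_{i-1}$, yields the required per-step sufficient condition.

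The hard part is the clean bookkeeping of these counts across the $(n-k)/2$ steps so that the hypothesis of the theorem becomes a uniform guarantee for every step. Both the number of isotropic points in $\mathcal{C}_{i-1}^{\perp_{\HH}}$ and the number of surviving short lines are standard $q^2$-Gaussian-binomial expressions; the key algebraic manoeuvre is to write the per-step inequality in a form that telescopes, with the geometric sum $1+q^2+\cdots+q^{n-k}$ appearing as the running total of $\mathbb{F}_{q^2}$-projective points consumed by the construction. I expect the principal technical obstacle to be ensuring a sufficient supply of \emph{isotropic} (not merely arbitrary) extensions: Hermitian self-orthogonality forces $v_i$ onto the Hermitian quadric in $\mathcal{C}_{i-1}^{\perp_{\HH}}$, and one must verify that this quadric remains large enough, step after step, to dominate the union bound over short lines. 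The side conditions $n>k\geq 2$ and $d\geq 2$ are exactly what makes each intermediate Hermitian subspace large enough for the argument to be nontrivial.
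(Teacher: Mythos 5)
The paper does not actually prove this theorem: it is quoted from Feng and Ma \cite{Feng2004}, whose argument is a single global double count over all $(n-k)$-dimensional symplectic self-orthogonal subspaces of $\mathbb{F}_q^{2n}$, exploiting the fact that the symplectic group acts transitively on nonzero vectors (every vector is isotropic for $\inner{\cdot,\cdot}_s$), so that the number of candidate subspaces whose symplectic dual contains a fixed low-weight vector is a constant independent of that vector. Your identification of the two sides of the inequality --- short $\mathbb{F}_{q^2}$-lines on the right, points of $PG((n-k)/2,q^2)$ on the left --- is correct and suggests a Hermitian-geometric proof should exist, but the mechanism you propose does not close.

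The concrete gap is the per-step union bound. To prevent a short line $\ell$ from surviving into $\mathcal{C}_i^{\perp_{\HH}}$ you must pick $v_i\notin\ell^{\perp_{\HH}}$, and each such condition excludes only about a $q^{-2}$ fraction of the available isotropic points of $\mathcal{C}_{i-1}^{\perp_{\HH}}$; a union bound over all surviving short lines therefore only works when their number is below roughly $q^{2}$, whereas the hypothesis allows up to about $q^{n-k}$ of them. So you cannot kill all short lines in one step, and killing ``enough'' per step is an averaging argument, not a union bound, whose bookkeeping is precisely the content of the theorem and is absent here: the survival fraction is not exactly $q^{-2}$ because $\mathcal{C}_{i-1}^{\perp_{\HH}}$ is a degenerate Hermitian space whose radical contains $\mathcal{C}_{i-1}$; the unitary group has two orbits on lines (isotropic and non-isotropic), so the counts are not uniform across short lines; and you must additionally forbid short lines from ever entering $\mathcal{C}_i$ itself, since a line inside the radical can never be removed at a later step. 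Two further cautions. First, restricting to $\mathbb{F}_{q^2}$-linear Hermitian self-orthogonal codes is a genuine strengthening of what must be shown relative to the additive/symplectic class used in the cited proof, and it is not automatic that the same constant survives this restriction. Second, the parity condition $n\equiv k\pmod 2$ is not what licenses linearity: an $(n-k)$-dimensional symplectic self-orthogonal subspace of $\mathbb{F}_q^{2n}$ needs no parity assumption, and in \cite{Feng2004} the condition enters through the counting formulas (and makes the left-hand side an integer), not through an $\mathbb{F}_{q^2}$-linear reformulation.
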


An upper bound\index{quantum bound}, which is well-suited for computer search, is the \textbf{quantum Linear Programming (LP) bound}\index{quantum bound!Linear Programming (LP)}. In the qubit\index{qubit} case, the bound is explained in details in~\cite[Section VII]{Calderbank1998}. The same routine adjusts immediately to the the general qudit\index{qudit} case, as was shown in~\cite[Section VI]{Ketkar2006}. The main tools are the \textbf{MacWilliams identities}~\cite{Macwilliams1963}\index{weight distribution!MacWilliams identities}. These are linear relations between the \textbf{weight distribution} of a classical code and its dual. They hold for all of the inner products we are concerned with here and have been very useful in ruling out the existence of quantum codes of certain ranges of parameters. Rains supplied a nice proof for the next bound, which is a corollary to the quantum LP bound\index{quantum bound!Linear Programming (LP)}, using the \textbf{quantum weight enumerator}\index{quantum weight enumerator}\index{weight enumerator} in~\cite{Rains99}. A quantum code\index{quantum code} that reaches the equality in the bound is said to be \textbf{quantum MDS (QMDS)}\index{quantum code!Maximum distance separable (QMDS)}.

\begin{theorem}[\textbf{Quantum Singleton Bound}\index{quantum bound!Singleton}]\label{th:QSingleton} An $\dsb{n,k,d}_q$-code with $k > 0$ satisfies $k \leq n-2d+2$.
\end{theorem}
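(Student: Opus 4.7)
My plan is to prove the quantum Singleton bound by combining the Knill-Laflamme condition with the no-cloning theorem, expressed through standard entropy inequalities. The text preceding the theorem already observes that a code correcting $\lfloor (d-1)/2 \rfloor$ errors also corrects any pattern of $d-1$ erasures; I would exploit this fact twice, using two ``survivor'' sets of $n - d + 1$ qudits whose intersection has size $n - 2d + 2$.

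For the setup, assume $n \geq 2d - 2$, since otherwise $n - 2d + 2 < 1$ and the target inequality $k \leq n - 2d + 2$ together with $k > 0$ would already force nonexistence. Partition $\{1, \ldots, n\}$ into disjoint blocks $X_1, X_2, I$ with $|X_1| = |X_2| = d - 1$ and $|I| = n - 2d + 2$. The complementary sets $B_j := I \cup X_{3-j}$ each have $n - d + 1$ qudits, and by the Knill-Laflamme condition, the erasure on $X_j$ is correctable from $B_j$ for $j = 1, 2$.

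The argument then runs as follows. Introduce a reference system $R$ of dimension $q^k$ and purify $Q$ by the maximally entangled state $\ket{\Psi} \in R \otimes Q$. Erasure-correctability of $X_j$ is equivalent to the decoupling identity $\rho_{R X_j} = \rho_R \otimes \rho_{X_j}$, which gives $S(R X_j) = k + S(X_j)$ with von Neumann entropy in $\log_q$ units. Purity of $\ket{\Psi}$ across $R$ and the $n$ code qudits gives $S(R X_j) = S(I X_{3-j})$, while subadditivity gives $S(I X_{3-j}) \leq S(I) + S(X_{3-j})$. Assembling these for $j = 1, 2$ yields
\[
k + S(X_j) \leq S(I) + S(X_{3-j}), \quad j = 1, 2.
\]
Adding the two instances cancels $S(X_1) + S(X_2)$ and leaves $k \leq S(I) \leq |I| = n - 2d + 2$, as desired.

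The main obstacle is that the cleaner algebraic alternative specific to the stabilizer setting, namely applying the classical Singleton bound to the $\mathbb{F}_q$-additive image $\Phi(C^{\perp_s}) \subseteq \mathbb{F}_{q^2}^n$ of cardinality $q^{n+k}$, succeeds immediately for \emph{pure} stabilizer codes, where $\wH(\Phi(C^{\perp_s}) \setminus \{\0\}) = d$ forces $q^{n+k} \leq q^{2(n-d+1)}$ and hence the bound, but falls short for \emph{impure} codes where $\Phi(C^{\perp_s})$ may contain nonzero vectors of Hamming weight strictly less than $d$. The entropy/no-cloning route treats pure and impure codes uniformly, which is why I would adopt it; the only delicate step inside it is the additive cancellation of the two asymmetric subadditivity inequalities.
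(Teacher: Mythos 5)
Your proof is correct, but it does not follow the paper's route: the chapter never proves Theorem~\ref{th:QSingleton} itself, it only points to Rains's proof via the quantum weight enumerator, presented there as a corollary of the quantum LP bound. What you give instead is the information-theoretic (no-cloning/entropy) proof: purify the code with a reference system $R$, note that correctability of the erasure of a set $X$ with $\abs{X}\le d-1$ is equivalent to the decoupling $\rho_{RX}=\rho_R\otimes\rho_X$ (this does follow from the Knill--Laflamme condition applied to all operators supported on $X$), and play the two survivor sets against each other using purity and subadditivity of the von Neumann entropy. The cancellation of $S(X_1)+S(X_2)$ after adding the two inequalities is exactly right and yields $k\le S(I)\le \abs{I}=n-2d+2$. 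What your route buys is a self-contained argument that applies to arbitrary $((n,K,d))$ codes, not just additive or stabilizer ones, and that treats degenerate codes on the same footing as pure ones --- precisely the failure mode you correctly identify for the naive classical-Singleton argument on $\Phi(C^{\perp_s})$. What the weight-enumerator route buys is membership in the LP machinery the chapter emphasizes, from which further refinements (such as shadow bounds) also follow. One small wrinkle: your dismissal of the case $n<2d-2$ is circular, since you invoke the target inequality itself to rule such codes out. The fix is immediate and stays inside your argument: for $n\le 2d-2$ split $\{1,\ldots,n\}$ into two disjoint sets $X_1,X_2$, each of size at most $d-1$, with $I=\emptyset$; the same pair of inequalities then gives $2k\le 0$, so no code with $k>0$ exists, which is what the bound asserts in that range.
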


Nearly all known families of classical codes over finite fields, especially those with well-studied algebraic and combinatorial structures, have been used in each of the constructions above. A partial list, compiled in mid 2005 as~\cite[Table II]{Ketkar2006}, already showed a remarkable breadth. The large family of cyclic-type codes, whose corresponding structures in the rings of polynomials are \textbf{ideals}\index{ring!ideal}, has been a rich source of ingredients for QECs with excellent parameters. This includes the \textbf{BCH}\index{BCH code}, \textbf{cyclic}\index{cyclic code}, \textbf{constacyclic}\index{constacyclic code}, \textbf{quasi-cyclic}\index{quasi-cyclic code}, and \textbf{quasi-twisted} codes\index{quasi-twisted code}. In the family, the nestedness property, very useful in the CSS construction\index{quantum code!Calderbank-Shor-Steane (CSS)}, comes for free. A great deal is known about their dual codes under numerous applicable inner products. For small $q$, the structures allow for extensive computer algebra searchers for reasonable lengths, aided by their \textbf{minimum distance bounds}\index{code!minimum distance}\index{bound}. 

The most comprehensive record for best-known qubit\index{qubit} codes is Grassl's online table~\cite{Grassl}. Numerous entries have been certified optimal, while still more entries indicate gaps between the best-possible and the best-known. It is a two-fold challenge to contribute meaningfully to the table. First, for $n \leq 100$, many researchers have attempted exhaustive searches. Better codes are unlikely to be found without additional clever strategies. Second, for $n > 100$, computing the actual distance $d(Q)$ tends to be prohibitive. As the length and dimension grow, computing the minimum distances of the relevant classical codes to derive the quantum distance\index{quantum code!minimum distance of} is hard~\cite{Vardy1997}. Improvements remain possible, with targeted searches. Recent examples include the works of Galindo \etal on quasi-cyclic constructions of quantum codes\index{quantum code}~\cite{Galindo2018}, where Steane enlargement\index{quantum code!Steane enlargement of CSS} is deployed, the search reported in~\cite{Lisonek2014} on cyclic codes\index{cyclic code} over $\mathbb{F}_{4}$, where Construction X\index{quantum code!Construction X of} is used with $e \in \{1,2,3\}$, and similar random searches on quasi-cyclic codes\index{quasi-cyclic code} done in~\cite{Ezerman2019} for qubit\index{qubit} and qutrit\index{qutrit} codes.

Less attention has been given to record-holding qutrit codes, for which there is no publicly available database of comparative extense. A table listing numerous qutrit\index{qutrit} codes is kept by Y. Edel in~\cite{Edel} based on their explicit construction as \textbf{quantum twisted codes}\index{quantum code!twisted} in~\cite{Bierbrauer2000}. Better codes than many of those in the table have since been found. 

Attempts to derive new quantum codes\index{quantum code} by shortening good stabilizer codes\index{quantum code!stabilizer} motivate closer studies on the weight distribution of the classical auxiliary codes, in particular when the stabilizer codes\index{quantum code!stabilizer} are QMDS. Shortening is very effective in constructing qudit\index{qudit} MDS codes of lengths up to $q^2+2$ and minimum distances\index{quantum code!minimum distance of} up to $q+1$. 

There has been a large literature on QMDS\index{quantum code!Maximum distance separable (QMDS)}. All of the above constructions via classical codes as well as the propagation rules\index{quantum propagation rules} have been applied to families of classical MDS codes, particularly the \textbf{Generalized Reed-Solomon}\index{Generalized Reed-Solomon code} and the \textbf{constacyclic MDS codes}\index{constacyclic code}\index{maximum distance separable (MDS) code}. Since the dual of an MDS code is MDS, the dual distance is evident, leaving only the orthogonality property to investigate. While the theoretical advantages are clearly abundant, there are practical limitations. The length of such codes is bounded above by $q^2+2$, when $q$ is even, and by $q^2+1$, when $q$ is odd, assuming the \textbf{MDS conjecture}\index{conjecture!MDS}.

For qubit\index{qubit} codes, the only nontrivial QMDS\index{quantum code!Maximum distance separable (QMDS)} are those with parameters $\dsb{5,1,3}$, $\dsb{6,0,4}$, and $\dsb{2m,2m-2,2}$. As $q$ grows larger, the practical value of QMDS codes quickly diminishes, since controlling qudit\index{qudit} systems with $q>3$ is currently prohibitive. A list for $q$-ary QMDS\index{quantum code!Maximum distance separable (QMDS)} codes, with $2 \leq q \leq 17$, is available in~\cite{Grassl2015}. Another list that covers families of QMDS codes and their references can be found in~\cite[Table V]{Chen2015}. More works in QMDS\index{quantum code!Maximum distance separable (QMDS)} continue to appear, with detailed analysis on the self-orthogonality conditions supplied from number theoretic and combinatorial tools.

Taking \textbf{algebraic geometry (AG) codes}\index{Algebraic Geometry (AG) code} as the classical ingredients is another route. A wealth of favourable results had already been available prior to the emergence of QECs. Curves with many \textbf{rational points} often lead, via the \textbf{Goppa construction}\index{Algebraic Geometry (AG) code!Goppa construction of}, to codes with good parameters. Their duals are well-understood, via the \textbf{residue formula}. Their designed distances can be computed from the \textbf{Riemann-Roch theorem}\index{Riemann-Roch theorem}. Chen \etal showed how to combine Steane enlargement\index{quantum code!Steane enlargement of CSS} and concatenated AG codes\index{Algebraic Geometry (AG) code} to derive excellent qubit codes in~\cite{Chen2005}. A quantum asymptotic bound\index{quantum bound} was established in~\cite{Feng2006}. Construction of QECs from AG codes\index{Algebraic Geometry (AG) code} was initially a very active line of inquiry. It had somewhat lessened in the last decade, mostly due to lack of practical values to add to the quest as $q$ grows. 

Using \textbf{codes over rings}\index{linear code over a ring} to construct QECs have also been tried. This route, however, does not usually lead to parameter improvements over QECs constructed from codes over fields. The absence of a direct connection from codes over rings to QECs necessitates the use of the \textbf{Gray mapping}\index{Gray map}, which often causes a significant drop in the minimum distance\index{quantum code!minimum distance of}.

\section{Going Asymmetric}
So far we have been working on the assumption that the bit-flips\index{quantum error!bit-flip} and the phase-flips\index{quantum error!phase-flip} are equiprobable. Quantum systems, however, have noise properties that are dynamic and asymmetric. The fault-tolerant threshold is improved when asymmetry is considered~\cite{Aliferis2008}.  It was Steane who first hinted at the idea of adjusting error-correction to the particular characteristics of the 	channel in~\cite{Steane1996}. Designing error control methods to suit the noise profile, which can be hardware-specific, is crucial. The study of \textbf{asymmetric quantum codes (AQCs)}\index{quantum code!asymmetric (AQCs)} gained traction when the ratios of how often $\sigma_z$ occurs over the occurrence of $\sigma_x$ were discussed in~\cite{Ioffe2007}, with follow-up constructions offered soon after in~\cite{Sarvepalli2009}. Wang \etal~established a mathematical model of AQCs in the general qudit\index{qudit} system in~\cite{Wang2010}. 

As in the symmetric case, $\EE_n:= \{ \omega^{\beta}  X(\ba) Z(\bb) : \ba, \bb \in \mathbb{F}_q^n, \, \beta \in \mathbb{F}_p\}$. An error $\Ee := \omega^{\beta} X(\ba) Z(\bb) \in \EE_n$ has ${\rm wt_{X}}(\Ee):=\wH(\ba)$ and ${\rm wt_{Z}}(\Ee):=\wH(\bb)$.

\begin{definition}[\textbf{Asymmetric Quantum Codes}]\label{def:AQC} 
	Let $d_{x}$ and $d_{z}$ be positive integers. A qudit\index{qudit} code $Q$ with dimension $K \geq q$ is called an asymmetric quantum code (AQC)\index{quantum code!asymmetric (AQCs)} with parameters $((n,K,d_{z},d_{x}))_{q}$ or $\dsb{n,k,d_{z},d_{x}}_{q}$, where $k=\log_{q} K$, if $Q$ detects $d_{x}-1$ qudits of $X$-errors and, at the same time, $d_{z}-1$ qudits of $Z$-errors. The code $Q$ is pure\index{quantum code!pure} if $\ket{\varphi}$ and $\Ee \ket{\psi}$ are orthogonal for any $\ket{\varphi}, \ket{\psi} \in Q$ and any $\Ee \in \EE_{n}$ such that $1 \leq {\rm wt_{X}}(\Ee)\leq d_{x}-1$ or 
	$1 \leq {\rm wt_{Z}}(\Ee)\leq d_{z}-1$. Any code $Q$ with $K=1$ is assumed to be pure\index{quantum code!pure}. 
	
	An $\dsb{n,k,d,d}_{q}$-AQC is symmetric, with parameters $\dsb{n,k,d}_{q}$, but the converse is not true since, for $\Ee \in \EE_{n}$ with ${\rm wt_{X}}(\Ee)\leq d-1$ and ${\rm wt_{Z}}(\Ee)\leq d-1$, 
	$\wQ(\Ee)$ may be bigger than $d-1$.
\end{definition}

To date, most known families of AQCs come from the  asymmetric version of the CSS construction\index{quantum code!Calderbank-Shor-Steane (CSS)} and its generalization in~\cite{Ezerman2013}. 
\begin{theorem}[\textbf{CSS-like Constructions for AQCs}]\label{thm:CSSAQC} Let $C_{j}$ be an $[n,k_{j},d_{j}]_{q}$-code for $j \in \{1,2\}$. Let $C_{j}^{\perp_{*}}$ be the dual of $C_j$ under one of the Euclidean\index{inner product!Euclidean}, the trace Euclidean\index{inner product!trace Euclidean}, the Hermitian\index{inner product!Hermitian}, and the trace Hermitian\index{inner product!trace Hermitian} inner products. Let $C_{1}^{\perp_{*}} \subseteq C_{2}$, with 
	\begin{align*}
		d_{z} &:=\max \{ \wH(C_{2} \setminus C_{1}^{\perp_{*}}), \wH (C_{1} 
		\setminus C_{2}^{\perp_{*}}) \} \mbox{ and } \\
		d_{x} &:=\min \{ \wH(C_{2} \setminus C_{1}^{\perp_{*}}),
		\wH(C_{1} \setminus C_{2}^{\perp_{*}}) \}.
	\end{align*}
	Then there exists an $\dsb{n,k_{1}+k_{2}-n,d_{z},d_{x}}_{q}$-code $Q$, which is pure\index{quantum code!pure} whenever $\{ d_{z},d_{x} \} = \{ d_{1},d_{2} \}$. If we have $C \subseteq C^{\perp_{*}}$ where $C$ is an $[n,k,d]_{q}$-code, then $Q$ is an $\dsb{n,n-2k,d',d'}_{q}$-code, where $d' = \wH(C^{\perp_{*}} \setminus C)$. The code $Q$ is pure whenever $d'=d^{\perp_{*}}:=d(C^{\perp_{*}})$.
\end{theorem}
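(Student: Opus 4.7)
The plan is to imitate the proof of Theorem~\ref{thm:CSS}, but to keep the $X$-part and $Z$-part of each error tracked separately so that the two asymmetric distances emerge individually from Theorem~\ref{thm:qudit_sympl}. I will handle the Euclidean case in detail and then reduce the trace Euclidean, Hermitian, and trace Hermitian cases to it using the $\mathbb{F}_q$-linear isometries already introduced in and around Theorem~\ref{thm:additive}, which push the relevant pairing onto the symplectic pairing on $\mathbb{F}_q^{2n}$.

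For the Euclidean case, I would take $H_j$ to be a parity-check matrix of $C_j$ and let $C \subseteq \mathbb{F}_q^{2n}$ be the row span of $\begin{bmatrix} H_1 & \0 \\ \0 & H_2 \end{bmatrix}$. The containment $C_1^{\perp_{\Euc}} \subseteq C_2$ makes each row of the first block symplectically orthogonal to every row of the second, and intra-block pairings vanish trivially, so $C \subseteq C^{\perp_s}$ and $\dim_{\mathbb{F}_q} C = 2n - k_1 - k_2$. Block-wise computation identifies $C^{\perp_s}$ with $C_2 \oplus C_1$ and $C$ itself with $C_1^{\perp_{\Euc}} \oplus C_2^{\perp_{\Euc}}$. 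Theorem~\ref{thm:qudit_sympl} then produces a stabilizer code $Q$ of length $n$ and dimension $k_1 + k_2 - n$.

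For the distances, I would re-run the Knill-Laflamme style analysis from the proof of Theorem~\ref{thm:symplectic}: an error with symplectic vector $(u, v)$ evades detection precisely when $(u, v) \in C^{\perp_s} \setminus C$, which by the block decomposition forces either $u \in C_2 \setminus C_1^{\perp_{\Euc}}$ or $v \in C_1 \setminus C_2^{\perp_{\Euc}}$, yielding ${\rm wt}_X = \wH(u) \geq \wH(C_2 \setminus C_1^{\perp_{\Euc}})$ or ${\rm wt}_Z = \wH(v) \geq \wH(C_1 \setminus C_2^{\perp_{\Euc}})$. Contrapositively, every error with $X$-weight strictly below the first bound and $Z$-weight simultaneously strictly below the second is detected. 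Since the hypothesis $C_1^{\perp_*} \subseteq C_2$ is self-dual under the nondegenerate $*$-pairing, the construction with $H_1$ and $H_2$ swapped is equally valid and interchanges the two bounds, so labeling the larger bound as $d_z$ and the smaller as $d_x$ yields exactly the advertised parameters. The special case $C_1 = C_2 = C$ collapses both bounds to $d' := \wH(C^{\perp_*} \setminus C)$, giving the $\dsb{n, n-2k, d', d'}_q$-code, and purity under $\{d_x, d_z\} = \{d_1, d_2\}$ follows because the minimum-weight codewords of $C_1$ and $C_2$ then witness the bounds outside the respective dual subcodes, so no nontrivial stabilizer element sits in the small-weight range of Definition~\ref{def:AQC}.

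The main obstacle I anticipate is not the core symplectic-dual calculation but the bookkeeping required to treat the four inner products uniformly: each non-Euclidean case needs a tailored $\mathbb{F}_q$-linear isometry carrying its pairing onto the symplectic pairing while preserving Hamming weights, after which the Euclidean analysis transfers verbatim. A second, smaller subtlety is justifying the $(d_z, d_x) = (\max, \min)$ labeling, which rests on the symmetry of the containment hypothesis under $*$-duality and hence on the nondegeneracy of the chosen inner product.
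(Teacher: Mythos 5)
Your proposal is correct: the paper states this theorem without proof (deferring to~\cite{Ezerman2013}), and your argument is exactly the expected one, namely the weight-refined version of the paper's own proof of Theorem~\ref{thm:CSS}, with the block decomposition $C = C_1^{\perp_{\Euc}} \oplus C_2^{\perp_{\Euc}}$, $C^{\perp_s} = C_2 \oplus C_1$ tracking the $X$- and $Z$-weights separately and the symmetry of the dual-containment hypothesis justifying the $\max/\min$ labeling. The only parts left as bookkeeping, the weight-preserving reduction of the trace/Hermitian pairings to the Euclidean case and the purity check, are correctly identified and pose no obstacle.
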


The propagation rules\index{quantum propagation rules} and bounds\index{quantum bound} for AQCs follow from the relevant rules and bounds on the nested codes and their respective duals. Details on how to derive new AQCs from already known ones were discussed by La Guardia in~\cite{Guardia2013}. The \textbf{asymmetric version of the quantum Singleton bound}\index{quantum bound!Singleton} reads $k \leq n-(d_x+d_x)+2$. To benchmark codes of large lengths, one can use the the \textbf{asymmetric versions of the quantum Gilbert-Varshamov bound}\index{quantum bound!Gilbert-Varshamov} established by Matsumoto in~\cite{Matsumoto2017}.   

Many best-performing AQCs with small $d_x$ and moderate $n$ and $k$ were derived in~\cite{Ezerman2013}. The optimal ones reach the upper bounds certified by an improved LP bound\index{quantum bound!Linear Programming (LP)}, called the \textbf{triangle bound}\index{quantum bound!triangle} in~\cite[Section V]{Ezerman2013}. Recent results, covering also AQCs of large lengths, came from an interesting family of nested codes defined from multivariate polynomials and Cartesian product point sets due to Gallindo \etal in~\cite{GGHR2018}. 

Most known \textbf{asymmetric quantum MDS (AQMDS) codes}\index{quantum code!asymmetric MDS (AQMDS)} are pure CSS\index{quantum code!Calderbank-Shor-Steane (CSS)}. Assuming the validity of the MDS conjecture, all possible parameters that pure\index{quantum code!pure} CSS AQMDS codes can have were established in~\cite{EJKL13}.
\begin{theorem}\label{thm:AQMDS}
	Assuming the MDS conjecture\index{conjecture!MDS}, there is a pure CSS\index{quantum code!Calderbank-Shor-Steane (CSS)} AQMDS code with parameters $[[n,j,d_z,d_x]]_q$, where $\{d_z,d_x \}=\{n-k-j+1,k+1\}$ if and only if one of the followings holds:
	\begin{enumerate}[noitemsep]
		\item $q$ is arbitrary, $n\ge 2$, $k\in\{1,n-1\}$, and $j\in\{0,n-k\}$.
		\item $q=2$, $n$ is even, $k=1$, and $j=n-2$.
		\item $q\ge 3$, $n\ge 2$, $k=1$, and $j=n-2$.
		\item $q\ge 3$, $2\le n\le q$, $k \le n-1$, and $0\le j\le n-k$.
		\item $q\ge 3$, $n=q+1$,  $k \le n-1$, and $j\in\{0,2,\ldots,n-k\}$.
		\item $q=2^m$, $n=q+1$, $j=1$, and $k\in \{2,2^{m}-2 \}$.		
		\item $q=2^m$ where $m\ge 2$, $n=q+2$,  
		$$ \left\{
		\begin{array}{l} k=1,\mbox{ and }j\in\{2,2^m-2\},\\
			k=3,\mbox{ and }j\in\{0,2^m-4,2^m-1\},\mbox{ or,}\\
			k=2^m-1,\mbox{ and }j\in\{0,3\}.\end{array} \right. $$
	\end{enumerate}
\end{theorem}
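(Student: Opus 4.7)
The plan is to translate the question about pure CSS AQMDS codes into a purely classical question about nested MDS codes over $\mathbb{F}_q$, and then apply the MDS conjecture together with explicit constructions from (extended) Reed--Solomon codes.

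First, I would unpack Theorem~\ref{thm:CSSAQC} in the MDS regime. A pure CSS AQMDS code with parameters $\dsb{n,j,d_z,d_x}_q$ and $\{d_z,d_x\}=\{n-k-j+1,k+1\}$ forces $C_1,C_2$ in the CSS construction to be MDS with $\dim C_1 = n-k$ and $\dim C_2 = k+j$, related by $C_1^{\perp_{\Euc}} \subseteq C_2$. Since the Euclidean dual of an MDS code is MDS, the problem reduces to the following: for which quadruples $(q,n,k,j)$ does there exist a pair of nested MDS codes $D \subseteq E$ in $\mathbb{F}_q^n$ with $\dim D = k$ and $\dim E = k+j$? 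The ``only if'' direction of the theorem is then a classification question for such nested pairs, and the ``if'' direction is a construction task.

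For the construction direction (sufficiency) I would treat the cases in the order they are listed. Case~(1) is handled by the trivial MDS codes: the repetition code, its dual, and $\mathbb{F}_q^n$ itself give nestings for $k\in\{1,n-1\}$ and any admissible $j$. Cases~(3), (4), (5) are handled by generalized Reed--Solomon codes: pick $n$ evaluation points in $\mathbb{F}_q\cup\{\infty\}$ (possible exactly when $n\le q+1$) and realize $D\subseteq E$ as $\mathrm{GRS}_k \subseteq \mathrm{GRS}_{k+j}$ by taking the natural inclusion of polynomial degrees. The slight restrictions (``$j\in\{0,2,\ldots,n-k\}$'' in case~(5), skipping $j=1$) come from the extra constraint at $n=q+1$ that the column multipliers of extended GRS codes are forced by the infinity point, so the one-step extension is not always realizable; this will require a direct check. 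Case~(2) uses the fact that for $q=2$ the only nontrivial even-length MDS codes are binary repetition/parity. Cases~(6) and (7) use the two exceptional MDS families in characteristic two of length $q+1$ and $q+2$ (doubly extended Reed--Solomon and the hyperoval code of dimension~$3$); here one verifies which inclusions among these are compatible by writing generator matrices explicitly.

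For the necessity direction I would invoke the MDS conjecture: except for the known even-characteristic exceptions, every MDS code over $\mathbb{F}_q$ has length at most $q+1$, and the known exceptions at length $q+2$ in characteristic two occur only for dimensions $3$ and $q-1$. This immediately forces $n\le q+1$ outside of the exceptional slot $q=2^m,\ n=q+2$, and inside that slot forces $k\in\{3,q-1\}$ for either $D$ or $E$ (which is where the asymmetric list of triples in case~(7) comes from). To rule out the ``forbidden'' intermediate values of $j$ (e.g.\ $j=1$ in case~(5), and the precise enumerations in~(6) and (7)), I would use the rigidity of MDS codes of these extremal lengths: up to monomial equivalence they are essentially unique, so a nested pair is equivalent to the statement that one extended/doubly-extended Reed--Solomon code sits inside another compatible one with a prescribed difference of dimensions, and this can be decided by comparing evaluation-point sets and column multipliers. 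This combinatorial/geometric comparison, carried out case by case, yields exactly the allowed $j$-values in~(5)--(7).

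The main obstacle is the exceptional even-characteristic analysis in cases~(6) and (7). The subtlety is that the ``hyperoval'' MDS codes of length $q+2$ exist only for $\dim\in\{3,q-1\}$ and are not of Reed--Solomon type, so nestedness must be verified against the specific geometric description of hyperovals in $\mathrm{PG}(2,q)$ (equivalently, against the Segre/Cassels classification of arcs). Once that geometric input is in hand, matching the possible $(k,j)$ pairs against the allowed $\dim\in\{3,q-1\}$ for whichever of $D$ or $E$ has length-compatibility $n=q+2$ produces exactly the list in case~(7); the remaining cases then follow cleanly.
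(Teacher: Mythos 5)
The paper itself gives no proof of Theorem~\ref{thm:AQMDS}; it simply quotes the classification from~\cite{EJKL13}. Your reduction is the right one and matches that source: purity plus the AQMDS condition forces both classical ingredients in Theorem~\ref{thm:CSSAQC} to meet the Singleton bound, so the problem becomes the existence of nested MDS codes $D\subseteq E$ in $\mathbb{F}_q^n$ with $\dim D=k$, $\dim E=k+j$, and the sufficiency direction is indeed handled by trivial codes, nested (generalized) Reed--Solomon codes for $n\le q+1$, and the characteristic-two exceptional codes for $n=q+2$.

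The gap is in your necessity argument for the extremal lengths. You propose to rule out the forbidden values of $j$ (e.g.\ $j=1$ at $n=q+1$ outside case~(6)) by invoking ``rigidity'' of MDS codes of length $q+1$ or $q+2$, i.e.\ uniqueness up to monomial equivalence, and then comparing evaluation sets and column multipliers. That uniqueness is not available: the classification of $(q+1)$-arcs in ${\rm PG}(k-1,q)$ is known only in low dimensions (Segre's theorem for $k=3$, $q$ odd) and is open in general, so the proposed combinatorial comparison cannot be carried out. The step that actually closes these cases is an extension lemma: from a nested MDS pair $D\subset E$ of length $n$ with $\dim E=\dim D+1$, writing $E=D\oplus\langle \bv\rangle$ one constructs an MDS code of length $n+1$ and dimension $\dim E$ (append to each codeword the coefficient of $\bv$ in its decomposition). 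Applying the MDS conjecture to this longer code forces $n+1\le q+1$, or $n+1=q+2$ with $q$ even and $\dim E\in\{3,q-1\}$, which is precisely what eliminates $j=1$ at $n=q+1$ except for $k\in\{2,q-2\}$ in characteristic two, and similarly pins down the lists in cases~(6) and~(7) after iterating over the dimension gap. Your sufficiency constructions, including the hyperoval codes, are fine, but you should also verify the purity condition $\{d_z,d_x\}=\{d_1,d_2\}$; for $j>0$ this follows because $d(C_1^{\perp_{\Euc}})=n-k+1$ exceeds $d(C_2)=n-k-j+1$, so the minimum-weight words of $C_2$ cannot lie in $C_1^{\perp_{\Euc}}$.
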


Going forward, three general challenges can be identified. First, find better AQCs, particularly in qubit\index{qubit} systems, than the currently best-known. More tools to determine or to lower bound $d_z$ and $d_x$ remain to be explored if we are to improve on the parameters. Second, construct codes with very high $d_z$ to $d_x$ ratio, since experimental results suggest that this is typical in qubit channels. Third, find conditions on the nested classical codes that yield impure\index{quantum code!impure} codes. 

\section{Other Approaches and a Conclusion}
We briefly mention other approaches to quantum error control before concluding. 

Successful small-scale hardware implementations often rely on \textbf{topological codes}\index{quantum code!topological}, first put forward by Kitaev~\cite{Kitaev2003}. This family of codes includes \textbf{surface codes}\index{quantum code!surface}~\cite{Bravyi2014} and \textbf{color codes}\index{quantum code!color}~\cite{Bombin2015}. Topological codes\index{quantum code!topological} encode information in, mostly $2$-dimensional, lattices. They are CSS\index{quantum code!Calderbank-Shor-Steane (CSS)} codes with a clever design. The lattice, on which the stabilizer generators act locally, has a bounded weight. The extra restrictions make the error syndrome easier to infer.

Instead of block quantum codes\index{quantum code}, studies have been done on \textbf{convolutional qubit codes}\index{qubit}\index{quantum code!convolutional}, see, \eg, \cite{Forney2007} and subsequent works that cited it. The logical qubits are encoded and transmitted as soon as they arrive in a steady stream. The rate $k$ over $n$ is fixed, but the length is not. This type of codes, like their classical counterparts, may be useful in \textbf{quantum communication}\index{quantum communication}.

An approach, that does not require self-orthogonality, constructs \textbf{entanglement-assisted quantum codes (EA-QECs)}\index{quantum code!entanglement-assisted (EA-QEC)}~\cite{Brun2006}. The price to pay is the need for a number of maximally entangled states, called \textbf{ebits} for \textbf{entangled qubits}, to increase either the rate or the ability to handle errors. Producing and maintaining ebits, however, tend to be costly, which offset their efficacy. Pairs of classical codes, whose intersections have some prescribed dimensions, were shown to result in EA-QECs in~\cite[Section 4]{Guenda2019}. A formula on the optimal number of ebits that an EA-QEC\index{quantum code!entanglement-assisted (EA-QEC)} requires is given in~\cite{Wilde2008}.
\begin{theorem}
	Given a linear $[n,k,d]_{q^2}$-code $C$ with parity check matrix $H$, the code $C^{\perp_{\rm H}}$ stabilizes an EA-QEC with parameters $\dsb{n,2k-n+c,d;c}_q$, where $c:=\rank(HH^{\dagger})$ is the number of ebits required.
\end{theorem}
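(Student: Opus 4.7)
The plan is to view the rows of the parity-check matrix $H$ as a would-be stabilizer generating set in $\EE_n$, measure its failure to be abelian through the Hermitian matrix $HH^{\dagger}$, and patch that failure with $c$ ebits via the Brun entanglement-assisted stabilizer formalism. Throughout, I work over $\mathbb{F}_{q^2}$ with the basis $\{1,\gamma\}$ of $\mathbb{F}_{q^2}/\mathbb{F}_q$, so that the isomorphism $\Phi$ from the discussion preceding Theorem~\ref{thm:additive} identifies $\mathbb{F}_{q^2}^n$ with $\overline{\EE_n}\cong\mathbb{F}_q^{2n}$, and the Hermitian inner product on $\mathbb{F}_{q^2}^n$ translates (via the trace-alternating form) into the symplectic inner product on $\mathbb{F}_q^{2n}$.

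First, I would lift the $n-k$ rows of $H$, which form an $\mathbb{F}_{q^2}$-basis of $C^{\perp_{\HH}}$, to operators $g_1,\ldots,g_{n-k}\in\EE_n$ whose residues span an $\mathbb{F}_q$-subspace of $\overline{\EE_n}$ of dimension $2(n-k)$. Because the $(i,j)$-entry of $HH^{\dagger}$ is the Hermitian pairing of the $i$-th and $j$-th rows of $H$, the matrix $HH^{\dagger}$ exactly records the pairwise commutation data of the $g_i$. The key identification is that $\rank(HH^{\dagger})=c$ equals the symplectic rank of the alternating form induced on the $\mathbb{F}_q$-span of the $\overline{g}_i$ inside $\overline{\EE_n}$.

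Second, I would apply a symplectic Gram-Schmidt reduction, which is the workhorse of the Brun-Devetak-Hsieh construction. Over the residues this yields a new generating set that splits into $n-k-c$ isotropic operators, each commuting with every other generator, together with $c$ hyperbolic operators arranged into $c$ simply-anticommuting pairs. I would then extend each such pair by appending Pauli-type factors acting non-trivially on one half of a maximally entangled pair shared with Bob. Introducing $c$ such Bell pairs (ebits) absorbs the non-commutativity: on the enlarged system of $n+c$ qudits, all extended generators commute and generate an abelian subgroup of $\EE_{n+c}$.

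Finally, I would apply Theorem~\ref{thm:qudit_sympl} to this enlarged abelian stabilizer. Dimension counting gives Alice's code space as $q^{(n+c)-(n-k+c)+c}=q^{2k-n+c}$, matching the claimed logical dimension $2k-n+c$. For the distance, since Bob's halves are noise-free by assumption, a weight-$w$ error on Alice's register is undetected iff, after ignoring the ebit registers, its image under $\Phi$ lies in $C\setminus C^{\perp_{\HH}}$; the minimum such weight is $d(C)=d$. The main obstacle I anticipate is the bookkeeping in the second step: verifying that the reduction produces exactly $c$ anticommuting pairs (and not $c/2$ or $2c$) requires a careful accounting of the Hermitian rank of $HH^{\dagger}$ over $\mathbb{F}_{q^2}$ versus the symplectic rank of the associated alternating form over $\mathbb{F}_q$, particularly in even characteristic where the diagonal of $HH^{\dagger}$ can contribute to the rank in ways that differ from the odd-characteristic case.
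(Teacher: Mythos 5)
The paper states this theorem without proof, deferring to Wilde and Brun~\cite{Wilde2008}, so there is no in-text argument to compare against; your outline follows the standard Brun--Devetak--Hsieh route, which is the right architecture. There are, however, genuine gaps. First, your generator bookkeeping is inconsistent and the final dimension count does not close. The $n-k$ rows of $H$ form an $\mathbb{F}_{q^2}$-basis of $C^{\perp_{\HH}}$, which under $\Phi^{-1}$ is an $\mathbb{F}_q$-subspace of $\overline{\EE_n}$ of dimension $2(n-k)$; you therefore need $2(n-k)$ lifted generators (say, the rows of $H$ together with $\gamma$ times the rows of $H$), not $n-k$ --- a set of $n-k$ elements cannot span a $2(n-k)$-dimensional $\mathbb{F}_q$-space as you assert. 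Carrying the wrong count forward, your decomposition into ``$n-k-c$ isotropic generators plus $c$ pairs'' and the computation $q^{(n+c)-(n-k+c)+c}=q^{2k-n+c}$ both fail: $(n+c)-(n-k+c)+c=k+c$, which equals $2k-n+c$ only when $k=n$. With the correct count, the symplectic Gram--Schmidt step yields $2(n-k)-2c$ isotropic generators and $c$ hyperbolic pairs, all $2(n-k)$ extended generators commute on the $n+c$ qudits, and the joint code space has dimension $q^{(n+c)-2(n-k)}=q^{2k-n+c}$, as required.

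Second, the identity you defer to the end --- that the number $c$ of hyperbolic pairs produced by the reduction equals $\rank(HH^{\dagger})$ --- is not a peripheral obstacle but the actual content of the theorem; everything else is generic entanglement-assisted stabilizer machinery. Note also that the symplectic rank of an alternating form is even, so it cannot literally equal $c$ as you write: what must be proved is that the $\mathbb{F}_q$-rank of the $2(n-k)\times 2(n-k)$ alternating Gram matrix of the trace-alternating (equivalently, symplectic) form on your $\mathbb{F}_q$-basis of $\Phi^{-1}(C^{\perp_{\HH}})$ equals $2\,\rank_{\mathbb{F}_{q^2}}(HH^{\dagger})$, and this is exactly where the characteristic-two subtleties you gesture at live. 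Without that lemma the stated value of $c$, and hence the logical dimension $2k-n+c$, is unsupported. A smaller point on the distance: the undetectable errors on Alice's register are those whose image under $\Phi$ lies in the normalizer, which corresponds to $C$, but outside the isotropic part of the stabilizer; that this gives $d=d(C)$ in the entanglement-assisted setting (because the ebits remove degeneracy) deserves a sentence rather than the bare assertion you give.
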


A larger class of QECs that includes all stabilizer codes\index{quantum code!stabilizer} is the \textbf{codeword stabilized (CWS) codes}\index{quantum code!codeword stabilized (CWS)}. The framework was proposed by Cross \etal in~\cite{Cross2009} to unify stabilizer (additive) codes and known examples of good nonadditive codes. General constructions for large CWS\index{quantum code!codeword stabilized (CWS)} codes are yet to be devised. Also currently unavailable are efficient encoding and decoding algorithms.

The bridge between classical coding theory and quantum error control was firmly put in place via the stabilizer formalism\index{quantum stabilizer formalism}. Various generalizations and modifications have been studied since, benefitting both the classical and quantum sides of the error-control theory. Well-researched tools and the wealth of results in classical coding theory translate almost effortlessly to the design of good quantum codes\index{quantum code}, moving far beyond what is currently practical to implement in actual quantum devices. Research problems triggered by error-control issues in the quantum setup revive and expand studies on specific aspects of classical codes, which were previously overlooked or deemed not so interesting. This fruitful cross-pollination between the classical and the quantum, in terms of error control, is set to continue.

\printindex
\end{document}